\newtheorem{theorem}{Theorem}[section]
\newtheorem{proposition}[theorem]{Proposition}
\newtheorem{lemma}[theorem]{Lemma}
\newcommand{\be}{\begin{equation}}
\newcommand{\ee}{\end{equation}}
\newcommand{\bey}{\begin{eqnarray}}
\newcommand{\eey}{\end{eqnarray}}
\newcommand{\ph}{\varphi}
\newcommand{\bR}{{\mathbb R}}
\newcommand{\tr}{\mbox{Tr}}
\newcommand{\wt}{\widetilde}
\newcommand{\cE}{{\cal E}}
\newcommand{\donothing}[1]{}
\begin{document}

\title{Dynamics of Bose-Einstein condensates of fermion pairs in the low density limit of BCS theory}

\author{Christian Hainzl and Benjamin Schlein\thanks{Partially supported by an ERC Starting Grant} 
\\ \\  Mathematical Institute, University of T{\"u}bingen \\ Auf der Morgenstelle 10, 72076 T\"ubingen, Germany \\ \\ 
Institute of Applied Mathematics, University of Bonn\\ Endenicher Allee 60, 53115 Bonn, Germany}

\maketitle

\begin{abstract}
We show that the time-evolution of the wave function describing the macroscopic variations of the pair density in BCS theory can be approximated, in the dilute limit, by a time-dependent Gross-Pitaevskii equation. 
\end{abstract}

\section{Introduction and main result}
\label{sec:intro}
\setcounter{equation}{0}

We consider systems of fermions interacting through a two-body potential admitting a bound state with negative energy. At very low temperature and density, the fermions are then expected to form tightly bounded pairs, which behave like bosons and produce a Bose-Einstein condensate. {F}rom the mathematical point of view, it would be very interesting to establish the validity of this picture starting from first principle many body quantum mechanics. This seems however a very difficult task. Recently, this problem has been studied in \cite{HS}, starting from the Bardeen-Cooper-Schrieffer (BCS) approximation of many body quantum mechanics. In this paper, the authors show that at zero temperature the macroscopic variations in the pair density are described (to leading order) by the 
Gross-Pitaevskii energy functional, corresponding to a Bose-Einstein condensate of fermion pairs with an effective repulsive interaction. In the present paper, we analyze the same regime, but from a dynamical point of view. We consider the time evolution, governed by the time-dependent BCS equation, of an initial BCS state with sufficiently small energy. The energy condition guarantees that fermions form pairs, and it fixes the microscopic structure of the pair density (see Proposition \ref{prop} below). We assume that, at time $t=0$, the macroscopic variations of the pair density are described by a wave function $\psi \in L^2 (\bR^3)$. Then we show that, at time $t \not =0$, on the macroscopic scale, the pair density is described by a new wave function $\psi_t \in L^2 (\bR^3)$, given, to leading order, by the solution of the time-dependent Gross-Pitaevskii equation with initial data $\psi_{t=0} = \psi$.  

\medskip

As pointed out above, our analysis is relevant for dilute systems at or close to zero temperature. If instead one considers the regime close to the critical temperature, the macroscopic variations of the pair density are described by the Ginzburg-Landau energy functional; this has been recently proven, starting again from the BCS approximation, in \cite{FHSS}, making use of earlier works \cite{HHSS,FHNS,HS2} where, among others, the existence of a unique critical temperature was established for systems interacting via a general class of two body interactions. 
Let us now summarize the physical picture. In the seminal paper \cite{Leg} Leggett suggested that the BCS functional, introduced by Bardeen, Cooper and Schrieffer in \cite{BCS}, serves as a valuable description of a large class of Fermi systems, with interactions ranging from weak two-particle potentials up to rather strong interactions allowing for bound states. For weak attractions, fermions tend to form Cooper pairs, living on a much larger scale compared with the mean particle distance, and displaying superfluidity (superconductivity). In the case of strong potentials, the fermions are forced into tightly bound pairs behaving as bosons and forming a Bose-Einstein condensate. This picture was later extended to positive temperature in \cite{NRS}. Whereas the BCS functional is supposed to remain valid in the whole crossover region from weak to strong coupling, two distinct parameter regimes are of particular interest, as pointed out in \cite{melo,DW}. One is the low density and low temperature limit which, for strong coupling, leads to the Gross-Pitaevskii theory (the BEC regime). The other one is the limit close to the critical temperature, where the macroscopic variations are captured by the Ginzburg-Landau theory. For nice reviews on the physical background of this subject we refer to \cite{randeria, zwerger}. Whereas the emergence of these effective models on the macroscopic scale was proven in \cite{HHSS,HS} in the static case,  the goal of the present paper is to settle the evolution problem on the BEC side.  A very interesting open problem, which is not addressed in the present paper, consists in understanding (starting from the BCS approximation or, even more ambitiously, from many-body quantum mechanics) the emergence of a time-dependent Ginzburg-Landau type equation for the description of the macroscopic evolution in the regime close to the critical temperature. 

\medskip

In BCS-theory, the state of the fermionic system is described by a self-adjoint operator $\Gamma : L^2 (\bR^3) \oplus L^2 (\bR^3) \to L^2 (\bR^3) \oplus L^2 (\bR^3)$, satisfying $0 \leq \Gamma \leq 1$, defined by the $2 \times 2$ matrix
\begin{equation}\label{eq:Gamma} \Gamma = \left( \begin{array}{ll} \gamma & \alpha \\ \overline{\alpha} & 1- \overline{\gamma} \end{array} \right)\, , \end{equation}
where $\gamma,\alpha : L^2 (\bR^3) \to L^2 (\bR^3)$ will be described in terms of their integral kernels $\gamma (x,y)$, $\alpha (x,y)$. The bar indicates complex conjugation, i.e. $\overline{\alpha} (x,y) = \overline{\alpha (x,y)}$. The fact that $\Gamma$ is hermitian implies that $\gamma$ is hermitian and that $\alpha$ is symmetric, i.e. $\gamma (x,y) = \overline{\gamma(y,x)}$ and $\alpha (x,y) = \alpha (y,x)$. Moreover, the condition $0 \leq \Gamma \leq 1$ implies $\Gamma^2 \leq \Gamma$ and thus $0 \leq \gamma \leq 1$ and $\alpha \overline{\alpha} \leq \gamma (1- \gamma)$. There are
no spin variables in $\Gamma$. We implicitly use $SU(2)$ invariance of the states. In fact, one has to imagine that the full, spin dependent, Cooper pair wave
function is the product of $\alpha(x,y)$ with an antisymmetric spin singlet.

\medskip


We are going to assume particles interact through a two-body potential $V$ admitting a negative energy bound state. More precisely, we will assume that $V$ satisfies the following assumptions.

\bigskip

\noindent {\bf Condition 1.} We assume $V \in L^1 (\bR^3) \cap L^\infty (\bR^3)$, with $V(-x) = V(x)$. Moreover, we assume that $-2\Delta + V$ has an isolated ground state energy $-E_b < 0$ with a  unique non-negative normalized ground state $\alpha_0$, satisfying $(-2\Delta + V) \alpha_0 = - E_b \alpha_0$. Under these assumptions, $\alpha_0 (-x) = \alpha_0 (x)$, $\alpha_0 \in L^1 (\bR^3) \cap L^\infty (\bR^3)$ and $|x|^6 \alpha_0 (x) \in L^\infty (\bR^3)$. 

\bigskip

\noindent We are interested in the ultradilute regime, where $N \gg 1$ fermions move in a volume of order $N^3$ (we consider a system defined on $\bR^3$; the length-scale is determined here by the choice of the initial data). The particles are also subjected to an external potential, which we assume to be weak and slowly varying (it varies only over macroscopic scales of order $N$). It is useful to rescale lengths, introducing macroscopic coordinates. Putting $h = 1/N \ll 1$ (so that the expected number of particles in the system is $1/h$), the BCS energy functional  in macroscopic units is given by
\[ \cE_{\text{BCS}} (\Gamma) = \tr \, \left( -h^2 \Delta + h^2 W \right) \gamma + \frac{1}{2}  \int dx dy \, V\left( \frac{x-y}{h} \right) |\alpha (x,y)|^2. \]
For a formal derivation of the BCS functional from quantum mechanics, see e. g. \cite[Appendix A]{HHSS}.  In \cite{HS}, in a slightly different setting (with periodic boundary conditions), it was proven that, for $h \ll 1$,  
\[ \inf_{0 \leq \Gamma \leq 1 ; \tr \, \gamma = 1/h} \cE_{\text{BCS}} (\Gamma) = - \frac{E_b}{2h} + h \inf_{\|\psi \|_2 = 1} \cE_{\text{GP}} (\psi) + o(h) \]
with the Gross-Pitaevskii energy functional
\[ \cE_{\text{GP}} (\psi) = \int  dx \, \left(\frac{1}{2} |\nabla \psi (x)|^2 + 2 W(x) |\psi (x)|^2 + g |\psi (x)|^4 \right)\]
and the coupling constant
\begin{equation}\label{eq:g} 
g = \frac{1}{(2\pi)^3} \int dq \, |\widehat{\alpha}_0 (q)|^4 \, (2q^2 + E_b).
\end{equation}
It was also shown in \cite{HS} that the pair density of every approximate minimizer of the BCS functional has the form 
\begin{equation}\label{eq:alpha-hs} \alpha (x,y) = \frac{1}{h^2} \psi \left(\frac{x+y}{2} \right) \alpha_0 \left( \frac{x-y}{h} \right) + \xi (x,y) \end{equation} where $\psi$ is an approximate minimizer of the GP functional, and where $\| \xi \|_{2} \leq C h^{1/2}$ and $\| \xi \|_{H^1} \leq C h^{-1/2}$ as $h \to 0$ (for comparison, $\| \alpha \|_2 \simeq h^{-1/2}$ and $\| \alpha \|_{H^1} \simeq h^{-3/2}$, as $h \to 0$).  The interpretation of the results of \cite{HS} is straightforward: at zero temperatures, particles will minimize their energy by forming pairs, each having energy $-E_b$ (the main term in the energy, $-E_b/2h$, is exactly the energy of the $1/2h = N/2$ pairs). The pairs will then behave like bosons and, in the ultradilute limit, they will form a Bose-Einstein condensate, with condensate wave function given by the minimizer of the GP-functional (the second term in the energy, given by $h  \inf_{\|\psi \|_2 = 1} \cE_{\text{GP}} (\psi)$, is the energy of the condensate). 

\medskip

In the present manuscript we consider the dynamics in the ultradilute limit, as determined by the BCS theory. We are going to study the evolution of BCS states with energy bounded above by
\begin{equation}\label{eq:bd-en} \cE_{\text{BCS}} (\Gamma) \leq -\frac{E_b}{2h} + C h \end{equation}
in the limit of small $h \ll 1$. This assumption on the energy implies that the fermions will still form pairs or, equivalently, that the microscopic structure of $\alpha (x,y)$ is still given by $\alpha_0 ((x-y)/h)$, as in (\ref{eq:alpha-hs}). In other words, (\ref{eq:bd-en}) guarantees that only macroscopic excitations of the pair density are allowed, because microscopic excitations would cost too much energy. 
Since the relative energy $\Delta E/E$ of the excitations we are considering is of the order $h^2$, we have to consider times of the order $1/h^2$ in order to observe a non-trivial dynamics. With respect to rescaled time, the (Hamiltonian) time evolution generated by the BCS functional $\cE_{\text{BCS}}$ is governed by the equation
\begin{equation}\label{eq:BCS-dyn} i h^2 \partial_t \Gamma_t  =  \left[ H_{\Gamma_t} , \Gamma_t \right] \end{equation}
for the time-dependent operator $\Gamma_t: L^2 (\bR^3) \oplus L^2 (\bR^3) \to L^2 (\bR^3) \oplus L^2 (\bR^3)$. The self-consistent Hamiltonian $H_{\Gamma_t}$ is given here by
\[ H_{\Gamma_t} = \left( \begin{array}{ll} -h^2 \Delta + h^2 W  & V\alpha_t \\ V \bar{\alpha}_t & h^2 \Delta - h^2 W \end{array} \right). \]
A similar equation emerging in the study of the dynamics of stars was recently studied in \cite{HLLS}. As discussed in \cite{HLLS} the proof of existence of local in time solutions of \eqref{eq:BCS-dyn} is standard. Since local solutions preserve the energy, and since the potential is assumed to be bounded, one immediately obtains global well-posedness for (\ref{eq:BCS-dyn}), for initial data in the energy space. Eq. (\ref{eq:BCS-dyn}) translates into the following two coupled nonlinear equations for the kernels $\gamma_t (x,y)$ and $\alpha_t (x,y)$:
\begin{equation}\label{eq:BCS-dyn2} \begin{split}
i h^2 \partial_t \, \gamma_t = & \; [-h^2\Delta + h^2 W, \gamma_t] + i G_{\alpha_t} \\
i h^2 \partial_t \, \alpha_t (x,y) = & \; \left( -h^2 \Delta_x +h^2 W(x) - h^2 \Delta_y +h^2 W(y) + V \left( \frac{x-y}{h} \right) \right) \alpha_t (x,y) \\ &- \int dz \gamma_t (x,z) V\left( \frac{z-y}{h} \right) \alpha_t (z,y) - \int dz \gamma_t (y,z) V \left( \frac{z-x}{h} \right) \alpha_t (x,z)
\end{split} \end{equation}
with the operator $G_{\alpha_t}$ defined by its integral kernel
\[ G_{\alpha_t} (x,y) = i \int dz \alpha (x,z) \overline{\alpha} (y,z) \left( V\left(\frac{y-z}h\right) - V\left(\frac{x-z}h\right) \right) \]
Note that $\tr \, \gamma_t$ is preserved, because $\tr \, G_{\alpha_t} = 0$ for all $t \in \bR$. 
In the following we will focus on the equation for $\alpha_t$. It is useful to represent the kernel $\alpha_t$ using center of mass and relative coordinates. For this reason, we define
\[ \wt{\alpha}_t (X,r) := \alpha_t \left(X + r/2 , X - r/2 \right)  \] 
which means that $ \alpha_t (x,y) = \wt{\alpha}_t ((x+y)/2 , x-y)$. {F}rom (\ref{eq:BCS-dyn}), we obtain then the following equation for the evolution of $\wt{\alpha}_t$: 
\begin{equation}\label{eq:ev-wtalpha}
\begin{split}
ih^2 \partial_t\wt{\alpha}_t (X,r) = \; & \left( -\frac{h^2}{2}  \Delta_X - 2 h^2 \Delta_r + V(r/h) \right)  \wt{\alpha}_t (X,r) \\ &+ h^2 \left( W(X+r/2) + W(X-r/2) \right) \wt{\alpha}_t (X,r)  \\ &- \int dz \, \gamma_t (X+r/2, X+z -r/2) V\left( z/h \right) \wt{\alpha}_t (X-r/2 + z/2,z) \\ &- \int dz \, \gamma_t (X -r/2, X-z+r/2) V \left(z/h \right) \wt{\alpha}_t (X+r/2-z/2,z)
\end{split} \end{equation}
We rewrite the equation (\ref{eq:ev-wtalpha}) in integral (or Duhamel) form:
\begin{equation}\label{eq:wtalpha-duh}
\begin{split}
\wt{\alpha}_t (X,r) = \, &U (t) \wt{\alpha}_0 (X,r) + \frac{1}{i} \int_0^t ds \, U (t-s) \left( W(X+r/2) + W (X-r/2) \right) \wt{\alpha}_s (X,r)  \\ &- \frac{1}{ih^2} \int_0^t ds \, U (t-s) \int dz \, \gamma_s (X+r/2, X+z -r/2) V(z/h) \wt{\alpha}_s (X-r/2 +z/2, z) \\ &- \frac{1}{ih^2} \int_0^t ds \, U (t-s) \int dz \, \gamma_s (X -r/2, X-z+r/2)  V(z/h) \wt{\alpha}_s (X-r/2 -z/2, z)
\end{split} 
\end{equation}
where we defined the evolution 
\begin{equation}\label{eq:U}
U(t) = e^{-it \left( -\frac{1}{2} \Delta_X - 2 \Delta_r + (1/h^2) V (r/h) \right)} = : U_X (t) \cdot U_r (t) \end{equation}
with $U_X (t) = e^{i (t/2) \Delta_X}$ denoting the free evolution in the center of mass coordinate $X$ and \[ U_r (t) = e^{-it \left( -2 \Delta_r + (1/h^2) V(r/h) \right)}  \] the evolution in the relative coordinate $r$. Observe that \begin{equation}\label{eq:Ur-alpha0} U_r (t) \alpha_0 (r/h) = e^{itE_b/h^2} \alpha_0 (r/h) \end{equation}

\medskip

We are now ready to state our main results.
\begin{theorem}\label{thm:main}
Suppose Condition 1 above is satisfied, and assume $W \in H^1 (\bR^3) \cap L^\infty (\bR^3)$. Consider an initial BCS state $0 \leq \Gamma_0 \leq 1$ with $ \tr \, \gamma_0 \leq C/h$, and
\begin{equation}\label{eq:en-ini} \cE_{BCS} (\Gamma_0) \leq - \frac{E_b}{2} \, \tr \, \gamma_0 + C h \end{equation}
Let $\wt{\alpha}_t$ be the solution of the integral equation (\ref{eq:wtalpha-duh}), and set, for arbitrary $t \in \bR$, 
\begin{equation}\label{eq:wtpsi} 
\psi_t (X) := \frac{e^{i t E_b / h^2}}{h} \int dr \, \alpha_0 (r/h) \wt{\alpha}_t (X,r) \end{equation}
Then we have $\| \psi_t \|_{H^1} \leq C$, uniformly in $h$ and in $t \in \bR$, and 
\[ \wt{\alpha}_t (X,r) = \frac{e^{-it E_b /h^2}}{h^2} \psi_t (X) \alpha_0 (r/h) + e^{-it E_b /h^2} \wt{\xi}_t (X,r) \] with $\| \wt{\xi}_t \|^2_{L^2 
(\bR^3 \times \bR^3)} \leq C h$ and $\| \wt{\xi}_t \|^2_{H^1 (\bR^3 \times \bR^3)} \leq C h^{-1}$. Moreover, if $\ph_t (X)$ denotes the solution of the Gross-Pitaevskii equation
\begin{equation}\label{eq:GP} i\partial_t \ph_t  = -\frac{1}{2} \Delta \ph_t + 2 W \ph_t + 2 g |\ph_t|^2 \ph_t \end{equation}
with the initial data $\ph_{t=0} (X) = \psi_0 (X)$ and with the coupling constant $g$ as in (\ref{eq:g}), we have
\begin{equation}\label{eq:conv} \| \psi_t - \ph_t \|_2 \leq C h^{1/2} e^{c |t|} \end{equation}
for constants $C,c >0$.  
\end{theorem}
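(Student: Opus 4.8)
The plan is to combine three ingredients: an a priori structural bound coming from conservation of the BCS energy, a projection of the pair evolution onto the relative ground state $\alpha_0(r/h)$ that distills an effective equation for $\psi_t$, and a Gronwall estimate comparing this effective equation with the Gross-Pitaevskii equation (\ref{eq:GP}). The overarching philosophy is that the fast microscopic dynamics in the relative variable is frozen into $\alpha_0(r/h)$ by the energy constraint, so that only the slow macroscopic profile $\psi_t$ evolves nontrivially.

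First I would establish the a priori bounds. Since the dynamics (\ref{eq:BCS-dyn}) conserves $\cE_{\text{BCS}}(\Gamma_t)$ and preserves the constraint $0\le\Gamma_t\le 1$ (hence $\alpha_t\overline{\alpha_t}\le\gamma_t(1-\gamma_t)$), the assumption (\ref{eq:en-ini}) propagates to all times. Running the static coercivity analysis behind Proposition \ref{prop} and \cite{HS} at each fixed $t$ then yields, uniformly in $t$ and $h$: the bound $\|\psi_t\|_{H^1}\le C$; the splitting $\wt\alpha_t=e^{-itE_b/h^2}(h^{-2}\psi_t(X)\alpha_0(r/h)+\wt\xi_t)$ with $\wt\xi_t$ orthogonal to $\alpha_0(\cdot/h)$ in the relative variable and $\|\wt\xi_t\|_2^2\le Ch$, $\|\wt\xi_t\|_{H^1}^2\le Ch^{-1}$, these being forced by the spectral gap of $-2\Delta_r+h^{-2}V(r/h)$ above its ground state $-E_b/h^2$; and, crucially for the nonlinear term, the near-saturation $\gamma_t\approx\alpha_t\overline{\alpha_t}$ (the energy-minimizing $\gamma$ given $\alpha$ is $\tfrac12(1-\sqrt{1-4\alpha\overline\alpha})\approx\alpha\overline\alpha$ in the dilute regime), with error controlled in the norms needed below. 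The normalization $\|\psi_t\|_2\to 1$ follows from $\tr\gamma_0\le C/h$ together with the smallness of $\wt\xi_t$.

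Next I would derive the effective equation by testing (\ref{eq:ev-wtalpha}) against $h^{-1}e^{itE_b/h^2}\alpha_0(r/h)$ and integrating in $r$; by (\ref{eq:wtpsi}) this produces $\psi_t$ on the left. The role of the oscillating phase is that the relative-Hamiltonian term, after moving $-2\Delta_r+h^{-2}V(r/h)$ onto $\alpha_0(r/h)$ and using $(-2\Delta_r+h^{-2}V(r/h))\alpha_0(r/h)=-(E_b/h^2)\alpha_0(r/h)$ from (\ref{eq:Ur-alpha0}), exactly cancels the $E_b\psi_t$ generated by differentiating $e^{itE_b/h^2}$. What survives, after dividing by $h^2$, is $i\partial_t\psi_t=-\tfrac12\Delta_X\psi_t+(\text{potential term})+(\text{cubic term})$. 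In the potential term the factor $\alpha_0(r/h)$ concentrates the relative variable at scale $|r|\lesssim h$; replacing $W(X\pm r/2)$ by $W(X)$ and using that $\wt\xi_t$ is orthogonal to $\alpha_0(\cdot/h)$ reduces it to $2W(X)\psi_t$, the difference being an $L^2(dX)$ error of order $h^{1/2}$ controlled through $W\in H^1\cap L^\infty$ and the moment bound $|x|^6\alpha_0\in L^\infty$.

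The main obstacle is the cubic term, which must be shown to equal $2g|\psi_t|^2\psi_t$ with $g$ as in (\ref{eq:g}), up to an $L^2$ error of size $h^{1/2}$. Here I would substitute the near-saturation $\gamma_s\approx\alpha_s\overline{\alpha_s}$ and the condensate structure $\alpha_s(x,y)\approx h^{-2}\psi_s(\tfrac{x+y}{2})\alpha_0(\tfrac{x-y}{h})$ into the two $\gamma_s$-integrals of (\ref{eq:ev-wtalpha}); the potential $V(z/h)$ and the factors $\alpha_0$ then localize all relative integration variables at scale $h$, collapsing the microscopic integrals to the constant $g=\tfrac{1}{(2\pi)^3}\int dq\,|\wh\alpha_0(q)|^4(2q^2+E_b)$ after passing to Fourier variables and using $(2q^2+E_b)\wh\alpha_0(q)=-\wh{V\alpha_0}(q)$. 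The contributions in which one or more factors of $\alpha_s$ are replaced by $\wt\xi_s$, or $\gamma_s$ by its saturation error, are estimated via $\|\wt\xi_s\|_2^2\le Ch$, $\|\wt\xi_s\|_{H^1}^2\le Ch^{-1}$ and $\|\psi_s\|_{H^1}\le C$ together with $V\in L^1\cap L^\infty$; tracking powers of $h$ shows each is $O(h^{1/2})$. Controlling these cross terms, and in particular ruling out an $O(1)$ contribution of the $H^1$-but-not-better remainder $\wt\xi_s$ to the $L^2(dX)$ norm of the nonlinearity, is the technically heaviest step. Finally I would close with a stability argument: writing $e_t$ for the collected $O(h^{1/2})$ error, $\psi_t$ solves (\ref{eq:GP}) with inhomogeneity $e_t$ and the same initial data as $\ph_t$, so subtracting the two equations, testing against $\psi_t-\ph_t$, and using that the cubic nonlinearity is Lipschitz on $\{\|u\|_{H^1}\le C\}$ (via $H^1(\bR^3)\hookrightarrow L^6$, which bounds $\||\psi|^2\psi-|\ph|^2\ph\|_2\le C\|\psi-\ph\|_2$ there), a Gronwall inequality gives $\|\psi_t-\ph_t\|_2\le Ch^{1/2}e^{c|t|}$. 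The uniform-in-$t$ bound $\|\psi_t\|_{H^1}\le C$ is what makes $c$ independent of $h$.
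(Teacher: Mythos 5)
Your overall architecture (propagate Proposition \ref{prop} along the flow by conservation of energy and particle number, project onto $\alpha_0(\cdot/h)$ so that the phase $e^{itE_b/h^2}$ cancels the relative ground-state energy, identify the limiting terms $2W(X)\psi_t$ and $2g|\psi_t|^2\psi_t$ with $g$ obtained from $\wh{V\alpha_0}(q)=-(2q^2+E_b)\wh\alpha_0(q)$, then run a stability estimate against $\ph_t$) matches the paper. The genuine gap is in the closing step. You propose to collect all error terms in $L^2(dX)$ at size $h^{1/2}$ and then close a Gronwall inequality in $L^2$, relying on the claim that $\| |\psi|^2\psi-|\ph|^2\ph\|_2\le C\|\psi-\ph\|_2$ whenever the $H^1$ norms are bounded. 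In three dimensions this is false: $H^1(\bR^3)\hookrightarrow L^6$ but not $L^\infty$, so $|\psi|^2\in L^3$ and $|\psi|^2(\psi-\ph)$ lands only in $L^{6/5}$, not in $L^2$. The same obstruction hits the error terms themselves: the contributions with one factor $\wt{\xi}_s$ and two factors $\psi_s$, as well as the terms involving $\gamma_s-\bar\alpha_s\alpha_s$ and $\gamma_s\wt{\xi}_s$, are estimated by H\"older using $\|\psi_s\|_6$, $\|\wt{\xi}_s\|_2$ and trace bounds, which again yields only an $L^{6/5}(dX)$ bound. You explicitly flag this as ``the technically heaviest step'' but supply no mechanism to overcome it, and with the available a priori information ($\psi_s$ bounded in $H^1$, $\wt{\xi}_s$ of size $h^{1/2}$ in $L^2$ and $h^{-1/2}$ in $H^1$) it cannot be overcome in $L^2$.

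The missing ingredient, which is how the paper closes the argument, is to keep the equation in Duhamel form and invoke the endpoint Strichartz estimate of Keel--Tao for the admissible pairs $(2,\infty)$ and $(6,2)$: the inhomogeneity is measured in $L^2([0,T];L^{6/5}(\bR^3))$ while the difference $\psi_t-\ph_t$ is controlled in $L^\infty([0,T];L^2(\bR^3))$. With this, the $L^{6/5}$ bounds of order $h^{1/2}$ on all the error terms $G^{(i)}_{j,s}$ and the estimate $\||\psi_s|^2\psi_s-|\ph_s|^2\ph_s\|_{6/5}\le C(\|\psi_s\|_6^2+\|\ph_s\|_6^2)\|\psi_s-\ph_s\|_2$ suffice, and iterating over intervals of a fixed length $T_0$ (chosen so that $CT_0^{1/2}\le 1/2$) produces the factor $e^{c|t|}$. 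Deriving the effective equation in differential form by testing, as you do, is fine, but you would in any case have to return to the integral form to apply Strichartz; the a priori part of your argument is correct and coincides with Proposition \ref{prop}.
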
 

{\bf Remarks: }
\begin{itemize}
\item The theorem implies that, in good approximation, the pair density $\alpha_t$ is given by the product of the ground state wave function $\alpha_0$ (varying on the microscopic scales, describing the internal degrees of freedom of the fermion pairs and remaining constant in time) and the solution of the Gross-Pitaevskii equation, varying on the macroscopic scale and describing the distributions of the pairs in space. 
\item We only assume that the number of particles $\tr \gamma_0$ is bounded above by $C/h$. 
The interesting regime is the one with $\tr \gamma_0$ of the order $1/h$. It is easy to see that $\| \psi_t \|_2 \simeq h \tr \, \gamma_0$; hence, if $\tr \, \gamma_0 \ll 1/h$, the convergence (\ref{eq:conv}) is less interesting (it becomes trivial if $\tr \, \gamma_0 \leq 1/h^{1/2}$).  
\item The Gross-Pitaevskii equation (\ref{eq:GP}) is a defocusing non-linear Schr\"odinger equation. It is known to be globally well posed in the space $H^1 (\bR^3)$. In other words, for any initial data $\ph_0 \in H^1 (\bR^3)$, there is a unique solution $\ph_t \in C((-\infty, \infty), H^1 (\bR^3))$. Since the nonlinearity is defocusing and the external potential is bounded, it is easy to see that the $H^1$ norm of $\ph_t$ is uniformly bounded in time; i.e., there exists a constant $C$, depending only on the $H^1$-norm of the initial data, with $\| \ph_t \|_{H^1} \leq C$ for all $t \in \bR$. 
\item The fact that $\psi_t$ represents a pair of fermions is nicely reflected in the Gross-Pitaevskii
equation. The factor $1/2$ in front of the kinetic term corresponds to the total mass $m=1$ of two fermions (we use units where each fermion has mass $1/2$). The factor $2$ in front of the external potential $W$ corresponds to the total charge $e=2$ of the two fermions. 
\item Our analysis can also be extended to particles in small external magnetic fields, varying on the macroscopic scale. The magnetic field acts on the particle through a minimally coupled magnetic potential, which, in macroscopic units, has the form $h A(x) = h (A_1 (x) , A_2 (x) , A_3 (x))$.  The limiting Gross-Pitaevskii equation (\ref{eq:GP}) is replaced in this case by 
\begin{equation}\label{eq:GP-A}  i\partial_t \ph_t  = \frac{1}{2} (-i \nabla - A)^2  \ph_t + 2 W \ph_t + 2 g |\ph_t|^2 \ph_t \end{equation}
The proof of Theorem \ref{thm:main} extends to a derivation of (\ref{eq:GP-A}), under the assumptions that $A$ is continuous and decays at infinity and that the spectrum of $(-i\nabla - A)^2$ is absolutely continuous (these conditions guarantee the validity of Stricharzt estimates similar to those discussed in Appendix \ref{sec:strich} for the evolution generated by the magnetic Laplacian; see \cite{G}).
\end{itemize}

A crucial ingredient of our analysis is the following energy bound established in \cite{FHSS,HS}, which we adapt here to our setting. 
\begin{proposition}\label{prop}
Suppose Condition 1 is satisfied, and $W \in L^\infty (\bR^3)$, and consider a BCS state $0 \leq \Gamma \leq 1$ with $\tr \, \gamma \leq C/h$ and 
\[ \cE_{BCS} (\Gamma) \leq -\frac{E_b}{2} \,  \tr \, \gamma  + C h.\]
Then $\tr \gamma^2 \leq \tr  (\gamma - \bar{\alpha} \alpha) \leq C h$ for $C> 0$ independent of $h$ (from $\gamma (1-\gamma) \geq \bar{\alpha} \alpha$ we find that $\gamma - \bar{\alpha} \alpha \geq \gamma^2$ and in particular that $\gamma-\bar{\alpha} \alpha$ is a non-negative operator). Moreover, with
\begin{equation}\label{eq:psi-def} \psi (X) := \frac{1}{h} \int dr \alpha_0 \left(r/h \right) \, \wt{\alpha} (X,r) \end{equation}
we have $\| \psi \|_{H^1} \leq C$, uniformly in $h > 0$, and 
\[ \wt{\alpha} (X,r) = \frac{1}{h^2} \psi (X) \alpha_0 (r/h) + \wt{\xi} (X,r) \]
with the bound $\| \wt{\xi} \|^2_{L^2} \leq C h$ and $\| \wt{\xi} \|^2_{H^1} \leq C h^{-1}$. In particular, since the number of particles $\tr \, \gamma_t$ and the 
BCS-energy $\cE_{BCS} (\Gamma_t)$ are preserved by the BCS time-evolution, the solution $\Gamma_t$ of the time dependent BCS equation with an initial BCS state $0 \leq \Gamma_0\leq 1$ with $\tr \, \gamma_0 \leq C/h$ and 
\[ \cE_{\text{BCS}} (\Gamma_0) \leq -\frac{E_b}{2} \tr \, \gamma_0 + C h \]
is such that $\tr \, \gamma_t^2 \leq \tr \, ( \gamma_t - \bar{\alpha}_t \alpha_t ) \leq C h$ and such that, with the definition 
\[ \psi_t (X) := \frac{e^{itE_b/h^2}}{h} \int dr \alpha_0 (r/h) \wt{\alpha}_t (X,r) \, , \]
one has $\| \psi_t \|_{H^1} \leq C$, uniformly in $h$ and in $t$, and
\[ \wt{\alpha}_t (X,r) = \frac{e^{-itE_b/h^2}}{h^2} \psi_t (X) \alpha_0 (r/h) + e^{-it E_b/h^2} \wt{\xi}_t (X,r) \]
with the error $\wt{\xi}_t$ satisfying $\| \wt{\xi}_t \|_{L^2}^2 \leq C h$ and $\| \wt{\xi}_t \|_{H^1}^2 \leq C h^{-1}$. 
\end{proposition}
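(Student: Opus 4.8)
The plan is to follow the static strategy of \cite{FHSS,HS} and reduce everything to a single lower bound for $\cE_{\text{BCS}}(\Gamma)$ that is saturated precisely when the relative part of the pair wave function lies in the ground state $\alpha_0$; all four claimed estimates will then drop out of the gap between this lower bound and the hypothesis $\cE_{\text{BCS}}(\Gamma)\le -\tfrac{E_b}{2}\tr\gamma + Ch$. The first step is to dominate the one-body kinetic energy by the pair kinetic energy. The constraint $\Gamma^2\le\Gamma$ gives $\gamma(1-\gamma)\ge\bar\alpha\alpha$, hence $\gamma\ge\bar\alpha\alpha$, and since $-h^2\Delta\ge0$,
\[ \tr\,(-h^2\Delta)\gamma \ge \tr\,(-h^2\Delta)\bar\alpha\alpha = h^2\int |\nabla_x\alpha(x,y)|^2\,dx\,dy . \]
Passing to the measure-preserving coordinates $X=(x+y)/2$, $r=x-y$ and using that $\wt\alpha(X,r)$ is even in $r$ (so the mixed $\nabla_X\!\cdot\!\nabla_r$ term integrates to zero by parity), the right-hand side equals $h^2\int(\tfrac14|\nabla_X\wt\alpha|^2 + |\nabla_r\wt\alpha|^2)$.

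Next I would combine the relative kinetic part with the interaction term $\tfrac12\int V(r/h)|\wt\alpha|^2$ into the fibrewise form $\langle \wt\alpha(X,\cdot), T_h\,\wt\alpha(X,\cdot)\rangle_r$ with $T_h := -h^2\Delta_r + \tfrac12 V(r/h)$. The dilation $r=hs$ shows $T_h$ is unitarily equivalent to $-\Delta_s + \tfrac12 V(s) = \tfrac12(-2\Delta + V)$, whose ground state energy is $-E_b/2$ with ground state $\alpha_0$; since $-E_b$ is isolated (Condition 1) there is a gap $\kappa>0$ giving $T_h\ge -\tfrac{E_b}{2} + \tfrac{\kappa}{2}$ on the orthogonal complement of $\alpha_0(\cdot/h)$ in $L^2(dr)$. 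Writing $\wt\alpha(X,\cdot) = h^{-2}\psi(X)\alpha_0(\cdot/h) + \wt\xi(X,\cdot)$ with $\wt\xi(X,\cdot)\perp\alpha_0(\cdot/h)$ — exactly the decomposition of the statement, $\psi$ being \eqref{eq:psi-def} — and integrating over $X$ yields
\[ h^2\!\int |\nabla_r\wt\alpha|^2 + \tfrac12\!\int V(r/h)|\wt\alpha|^2 \ge -\tfrac{E_b}{2}\|\wt\alpha\|_2^2 + \tfrac{\kappa}{2}\|\wt\xi\|_2^2 . \]

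Collecting the two displays, using $\|\wt\alpha\|_2^2=\tr\,\bar\alpha\alpha$ and $|h^2\tr W\gamma|\le h^2\|W\|_\infty\tr\gamma\le Ch$, I get
\[ \cE_{\text{BCS}}(\Gamma) + \tfrac{E_b}{2}\tr\gamma \ge \tfrac{h^2}{4}\!\int|\nabla_X\wt\alpha|^2 + \tfrac{\kappa}{2}\|\wt\xi\|_2^2 + \tfrac{E_b}{2}\,\tr(\gamma-\bar\alpha\alpha) - Ch . \]
The energy hypothesis bounds the left side by $Ch$, so each of the three non-negative terms on the right is $O(h)$. This gives simultaneously $\tr(\gamma-\bar\alpha\alpha)\le Ch$ (whence $\tr\gamma^2\le\tr(\gamma-\bar\alpha\alpha)\le Ch$), the bound $\|\wt\xi\|_2^2\le Ch$, and $\int|\nabla_X\wt\alpha|^2\le C/h$. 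The $H^1$-bound on $\psi$ follows by Cauchy--Schwarz in $r$: $|\psi(X)|^2\le h\|\wt\alpha(X,\cdot)\|_r^2$ and $|\nabla\psi(X)|^2\le h\|\nabla_X\wt\alpha(X,\cdot)\|_r^2$ integrate to $\|\psi\|_2^2\le h\,\tr\bar\alpha\alpha\le C$ and $\|\nabla\psi\|_2^2\le h\int|\nabla_X\wt\alpha|^2\le C$.

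The delicate point — and the main obstacle — is the $H^1$-bound on $\wt\xi$ in the relative variable, because both $\wt\alpha$ and its leading part $h^{-2}\psi\,\alpha_0(\cdot/h)$ have $\|\nabla_r(\cdot)\|_2^2$ of order $h^{-3}$, so one must extract the cancellation leaving a difference of order only $h^{-1}$. Here I would exploit that $T_h$ commutes with the fibrewise rank-one projection onto $\alpha_0(\cdot/h)$, so that $\langle\wt\xi,(T_h+\tfrac{E_b}{2})\wt\xi\rangle$ equals the \emph{full} form value, already bounded by $Ch$ in the previous step; then from
\[ h^2\!\int|\nabla_r\wt\xi|^2 = \langle\wt\xi,(T_h+\tfrac{E_b}{2})\wt\xi\rangle - \tfrac12\langle\wt\xi,V(\cdot/h)\wt\xi\rangle - \tfrac{E_b}{2}\|\wt\xi\|_2^2 \]
the bounds $\|V\|_\infty<\infty$ and $\|\wt\xi\|_2^2\le Ch$ give $\int|\nabla_r\wt\xi|^2\le C/h$. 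The $X$-derivative is easier: $\nabla_X\wt\xi=\nabla_X\wt\alpha-h^{-2}(\nabla\psi)\alpha_0(\cdot/h)$, both pieces being $O(h^{-1})$ in $L^2$, so $\int|\nabla_X\wt\xi|^2\le C/h$; together with $\|\wt\xi\|_2^2\le Ch$ this is $\|\wt\xi\|_{H^1}^2\le C/h$. Finally, the assertions for $\Gamma_t$ are immediate, since $\tr\gamma_t$ and $\cE_{\text{BCS}}(\Gamma_t)$ are conserved: the static bounds apply verbatim at each time, and the global phase $e^{\pm itE_b/h^2}$ leaves all the norms unchanged. Besides the relative $H^1$-cancellation, the only genuinely new ingredient compared with \cite{FHSS,HS} is the external-potential term $h^2\tr W\gamma$, which is controlled harmlessly as $O(h)$ via $\tr\gamma\le C/h$ and $W\in L^\infty$.
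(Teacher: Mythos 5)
Your proposal is correct and follows essentially the same route as the paper's proof: the same coercivity lower bound isolating $\tr\,(\gamma-\bar\alpha\alpha)$, $\|\nabla_X\wt{\alpha}\|_2^2$ and the fibrewise spectral-gap term $\kappa\|\wt{\xi}\|_2^2$, the same Cauchy--Schwarz bounds for $\|\psi\|_{H^1}$, and the same use of $V\in L^\infty$ to extract $\|\nabla_r\wt{\xi}\|_2^2\le Ch^{-1}$ from the already-controlled quadratic form. The only cosmetic difference is that you drop the kinetic energy of $\gamma-\bar\alpha\alpha$ and retain only its $E_b/2$ multiple, whereas the paper keeps the full operator $-h^2\Delta+E_b/2$ acting on it; both give the identical trace bound.
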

{\bf Remark:} We will also use the notation $\xi_s (x,y) = \wt{\xi}_s ((x+y)/2, x-y)$ (so that $\alpha_s (x,y) = (1/h^2) \psi_s ((x+y)/2) \alpha_0 ((x-y)/h) + \xi_s (x,y)$). The bounds for $\wt{\xi}_s$ immediately imply similar bounds for $\xi_s$.
\begin{proof}
We start by noticing that
\[ \begin{split} C h \geq \; & \cE_{\text{BCS}} (\Gamma) + \frac{E_b}{2} \tr \gamma \\ = \; & \tr \left( - h^2 \Delta + h^2 W + \frac{E_b}{2} \right) \gamma + \frac{1}{2} \int dx dy \, V ((x-y)/h) |\alpha (x,y)|^2 \\ \geq \; & \tr \left( -h^2 \Delta + \frac{E_b}{2} \right) \gamma + \frac{1}{2} \int dx dy \, V ((x-y)/h) |\alpha (x,y)|^2 - h^2 \| W \|_\infty \tr \, \gamma 
\end{split}  \]
Since $\tr \gamma \leq C /h$, and since $\gamma - \bar{\alpha} \alpha \geq \gamma^2 \geq 0$, 
we find
\[ \begin{split} 
Ch &\geq  \tr \left( -h^2 \Delta + \frac{E_b}{2} \right) (\gamma - \bar{\alpha} \alpha) +  \tr \left( -h^2 \Delta + \frac{E_b}{2} \right) \overline{\alpha} \alpha + \frac{1}{2} \int dx dy \, V ((x-y)/h) |\alpha (x,y)|^2  \\
&\geq  C \, \tr \, (\gamma - \bar{\alpha} \alpha) +  \tr \left( -h^2 \Delta + \frac{E_b}{2} \right) \overline{\alpha} \alpha + \frac{1}{2} \int dx dy \, V ((x-y)/h) |\alpha (x,y)|^2
\end{split} \]
Next we introduce center of mass coordinate $X= (x+y)/2$ and relative coordinate $r = x-y$. 
Since $\Delta_x + \Delta_y = (1/2) \Delta_X + 2 \Delta_r$, we conclude that
\[\begin{split}  \tr \Delta \overline{\alpha} \alpha & = \int dx dy \, \overline{\Delta_x \alpha (x,y)} \, \alpha (x,y) =
 \frac{1}{2} \int dx dy \, \overline{ (\Delta_x + \Delta_y) \alpha (x,y)} \alpha (x,y) \\ &= \int dX dr \, \overline{ \left(\frac{1}{4} \Delta_X  + \Delta_r \right) \wt{\alpha} (X,r)} \, \wt{\alpha} (X,r) \end{split} \]
 Therefore
 \begin{equation}\label{eq:prop-1} \begin{split} Ch \geq \; &C \tr \,  (\gamma - \bar{\alpha} \alpha)  +  \frac{h^2}{4} \int dX dr \, \left| \nabla_X \wt{\alpha} (X,r) \right|^2
\\ &+ \int dX \left\langle \wt{\alpha} (X,.) , \left(-h^2 \Delta_r + \frac{1}{2} V( r/h) + \frac{E_b}{2} \right) \wt{\alpha} (X,.) \right\rangle \end{split} \end{equation}
This implies first of all the bound $\tr \, \gamma^2 \leq \tr (\gamma - \bar{\alpha} \alpha) \leq C h$ (the first inequality is a general property of BCS states, since $\gamma (1-\gamma) \geq \bar{\alpha} \alpha$). Now we set
\begin{equation}\label{eq:prop-2} \wt{\alpha} (X,r) = \frac{1}{h^2} \psi (X) \alpha_0 (r/h) + \wt{\xi} (X,r) \end{equation}
with $\psi$ as defined in (\ref{eq:psi-def}). {F}rom (\ref{eq:prop-1}) (using that $(-\Delta + (1/2) V) \alpha_0 = (E_b/2) \alpha_0$), we find 
 \begin{equation}\label{eq:prop-3} 
 \begin{split}
Ch \geq \; & C \tr \, (\gamma-\bar{\alpha} \alpha) +  \frac{h^2}{4} \int dX dr \, \left| \nabla_X \wt{\alpha} (X,r) \right|^2
\\ &+ \int dX \left\langle \wt{\xi} (X,.) , \left(-h^2 \Delta_r + \frac{1}{2} V( r/h) + \frac{E_b}{2} \right) \wt{\xi} (X,.) \right\rangle \end{split}
\end{equation}
Eq. (\ref{eq:prop-2}) implies that, for (almost) every $X \in \bR^3$, $\wt{\xi} (X,.)$ is orthogonal to $\alpha_0$. Hence, if $\kappa > 0$ denotes the spectral gap between the ground state of $-\Delta + (1/2) V$ and its first excited state, we deduce from (\ref{eq:prop-3}) that 
\[ \int dX dr |\wt{\xi} (X,r)|^2 \leq C \kappa^{-1} h \]
The $L^2$-norm of $\psi$ can be estimated using the very definition (\ref{eq:psi-def}). By Cauchy-Schwarz, we find
\[ \begin{split}  \| \psi \|_2^2 \leq \; &\frac{1}{h^2} \int dX dr_1 dr_2 \alpha_0 (r_1/h) \alpha_0 (r_2/h) \overline{\wt{\alpha} (X,r_1)} \wt{\alpha} (X,r_2)\\  \leq \; &\frac{1}{h^2} \int dX dr_1 dr_2 |\alpha_0 (r_2/h)|^2 |\wt{\alpha} (X,r_1)|^2 \leq h \| \wt{\alpha} \|_2^2 \leq C \end{split} \]
because $\| \wt{\alpha} \|_2^2 = \tr \, \overline{\alpha} \alpha \leq \tr \gamma \leq C/h$. To bound the $L^2$-norm of $\nabla \psi$ we use again Cauchy-Schwarz to estimate
\[ \begin{split} \| \nabla \psi \|_2^2 &\leq \frac{1}{h^2} \int dX dr_1 dr_2 \alpha_0 (r_1/h) \alpha_0 (r_2/h) \overline{\nabla_X \wt{\alpha} (X,r_1)} \, \nabla_X \wt{\alpha} (X,r_2) \\ &\leq \frac{1}{h^2} \int dX dr_1 dr_2 |\alpha_0 (r_2/h)|^2 |\nabla_X \wt{\alpha} (X,r_1)|^2 \leq h \| \nabla_X \wt{\alpha} \|_2^2 \leq C \end{split} \]
where we used (\ref{eq:prop-3}) to bound $\| \nabla_X \wt{\alpha} \|_2^2 \leq C/h$. Since 
\[ \left \| \nabla_X  \left[ \frac{1}{h^2} \psi (X) \alpha_0 (r/h) \right] \right\|_2^2 = \frac{1}{h^4} \int dX dr \, |\nabla \psi (X)|^2 \alpha_0^2 (r/h) = \frac{\| \alpha_0 \|_2^2 \| \nabla \psi \|_2^2}{h} \leq C h^{-1} \]
the bound $\| \nabla_X \wt{\alpha} \|_2^2 \leq C/h$ also implies that $\| \nabla_X \wt{\xi} \|_2^2 \leq C h^{-1}$. On the other hand, since $V \in L^\infty (\bR^3)$, we have 
$-2h^2 \Delta_r + V(r/h) \geq - 2 h^2 \Delta_r - C$ and therefore, by (\ref{eq:prop-3}),  
\[ C h \geq \int dX \, \left\langle \wt{\xi} (X,.) , \left[- 2h^2 \Delta_r - C \right] \wt{\xi} (X,.) \right\rangle = 2 h^2 \int dX dr \, |\nabla_r \wt{\xi} (X,r)|^2 - C \| \wt{\xi} \|_2^2 \]
Since $\| \wt{\xi} \|_2^2 \leq C h$, we immediately conclude that $\| \nabla_r \, \wt{\xi} \|_2^2 \leq C h^{-1}$.
\end{proof}

\section{Proof of Theorem \ref{thm:main}}
\label{sec:proof}

We start from the definition (\ref{eq:wtpsi}) for the wave function $\psi_t$, that is 
$$\psi_t (X) := \frac{e^{i t E_b / h^2}}{h} \int dr \, \alpha_0 (r/h) \wt{\alpha}_t (X,r),$$
and we insert the integral equation (\ref{eq:wtalpha-duh}) for $\wt{\alpha}_t$. We find
\[ \begin{split} 
\psi_t &(X) \\ = \; & \frac{e^{iE_b t/h^2}}{h} \int dr \alpha_0 (r/h) U(t) \wt{\alpha}_0 (X,r)  \\ &+  \frac{e^{i t E_b / h^2}}{ih} \int dr \, \alpha_0 (r/h) \int_0^t ds \, U(t-s) (W (X+r/2) + W(X-r/2)) \wt{\alpha}_s (X,r) 
\\ &- \frac{e^{itE_b/h^2}}{ih^3} \int dr \alpha_0 (r/h) \int_0^t ds \, U(t-s) \int dz \, \gamma_s (X+r/2, X+z -r/2) V(z/h) \\ &\hspace{8cm} \times \wt{\alpha}_s (X-r/2 + z/2,z) \\ &- \frac{e^{itE_b/h^2}}{ih^3} \int dr \alpha_0 (r/h) \int_0^t ds \, U(t-s) \int dz \gamma_s (X -r/2, X-z+r/2) V(z/h)  \\ &\hspace{8cm} \times \wt{\alpha}_s (X+r/2-z/2,z) \end{split} \]
We use now the decomposition (\ref{eq:U}) of the evolution $U(t)$, saying that $U(t) = U_X(t) U_r(t)$. Since we integrate over $r$ we can let $U_r(t)$ act on the left in the form of its adjoint and use (\ref{eq:Ur-alpha0}), more precisely 
$U^\ast_r(t-s) \alpha_0 = e^{-i(t-s)E_b/h^2} \alpha_0$, such that the phase in the $t$-variable cancels. We conclude that
\begin{equation}\label{eq:wtpsi-123}  \begin{split} 
\psi_t &(X) \\  = \; &U_X (t)  \psi_0 (X)  + \frac{1}{ih} \int_0^t ds \, U_X (t-s) \int dr \, \alpha_0 (r/h) 
(W (X+r/2) + W(X-r/2)) e^{is E_b/h^2} \wt{\alpha}_s (X,r) 
\\ & -  \frac{1}{ih^3}  \int_0^t ds \, U_X (t-s)  \int dr dz  \, \alpha_0 (r/h) \gamma_s (X+r/2, X+z -r/2) V(z/h) \\ & \hspace{8cm} \times e^{is E_b/h^2} \wt{\alpha}_s (X+(z-r)/2,z) \\ &- \frac{1}{ih^3}  \int_0^t ds \, U_X (t-s)  \int dr dz \, \alpha_0 (r/h) \gamma_s (X -r/2, X-z+r/2) V(z/h) \\ & \hspace{8cm} \times e^{is E_b/h^2} \wt{\alpha}_s (X+ (r-z)/2,z) \end{split} .\end{equation}
Notice that we can see already the Gross-Pitaevskii equation appear. To leading order the integration over the function $\alpha_0(r/h)$ forces $r$ to be close to $0$ in the above integrals, such that one can see the term with the potential $2W(X)$ emerge. For the non-linear part we will simply use the fact that $\gamma_s$ is essentially $\overline{\alpha_s } \alpha_s$ such that the last term has to include $|\psi|^2 \psi$ to leading order. 

By Proposition \ref{prop}, the second term on the r.h.s. of the last equation \eqref{eq:wtpsi-123} can be decomposed as follows:
\[ \begin{split} 
 \frac{1}{h} \int dr \, \alpha_0 (r/h)  (W &(X+r/2) + W(X-r/2)) e^{is E_b/h^2} \wt{\alpha}_s (X,r)  \\ 
= \; & \frac{\psi_s (X)}{h^3} \int dr \alpha^2_0 (r/h) (W(X+r/2) + W(X-r/2)) \\ &+ \frac{1}{h} \int dr \alpha_0 (r/h) (W (X+r/2) + W(X-r/2)) \wt{\xi}_s (X,s) \\
= \; &2 W(X) \psi_s (X) \\ &+ \psi_s (X) \int dr \alpha^2_0 (r) (W(X+hr/2) - W(X)) 
\\ & + \psi_s (X) \int dr \alpha^2_0 (r) (W(X-hr/2) - W(X)) 
\\ &+ \frac{1}{h} \int dr \alpha_0 (r/h) (W (X+r/2) + W(X-r/2)) \wt{\xi}_s (X,r) \\
=: \; & 2 W(X) \psi_s (X) + G^{(0)}_{1,s} (X) + G^{(0)}_{2,s} (X) + G^{(0)}_{3,s} (X) 
\end{split} \]
where we used the normalization $\| \alpha_0 \|_2 = 1$. We observe that
\[\begin{split}  |G^{(0)}_{1,s} (X)| \leq \; &(h/2) \, |\psi_s (X)|  \int dr \alpha^2_0 (r) \left| \int_0^1 d\kappa \, r \cdot \nabla W(X+h\kappa r/2) \right| \\ \leq \; &(h/2) |\psi_s (X)| \int dr \int_0^1 d\kappa \, \alpha_0 (r)^2 \, |r| \,  |\nabla W (X+h\kappa r/2)| \end{split} \]
Therefore
\[ \begin{split} \| G^{(0)}_{1,s} \|_{6/5}^{6/5}  \leq \; &C h^{6/5}  \int dX |\psi_s (X)|^{6/5}  \left| \int dr \int_0^1 d\kappa |r| \alpha_0^2 (r)  |\nabla W (X+h\kappa r/2)| \right|^{6/5}  \\
\leq \; & C h^{6/5} \left( \int dr  |r| \alpha_0^{2} (r) \right)^{1/5}  \, \int dX dr \int_0^1 d\kappa  |r| \alpha^2_0 (r) |\psi_s (X)|^{6/5}  |\nabla W (X+h\kappa r/2)|^{6/5} \end{split}\]
where we used H\"older inequality with exponents $6$ and $6/5$ (with respect to the measure $d\kappa dr |r| \alpha^2 (r)$). {F}rom the assumptions on $W$, Condition 1 and Proposition \ref{prop} (which implies that $\| \psi_s \|_{H^1}$ is uniformly bounded in $h$ and in $s$), we find 
\[ \begin{split} 
\| G^{(0)}_{1,s} \|_{6/5}^{6/5}  
\leq \; & C h^{6/5} \left[\int dX dr \, 
|r| \alpha^2_0 (r) \, |\psi_s (X)|^3 \right]^{2/5} \left[ \int dX dr \,
|r| \alpha^2_0 (r) \,  |\nabla W (X)|^2 \right]^{3/5} \\
\leq \; &C h^{6/5} \| \psi_s \|_{H^1}^{6/5} \| \nabla W \|_2^{6/5} \leq C h^{6/5} \end{split} \]
The term $G^{(0)}_{2,s}$ can be bounded analogously. To bound $G^{(0)}_{3,s}$, on the other hand, we observe that
\[ \begin{split}  \| G^{(0)}_{3,s} \|_{6/5}^{6/5} \leq \; & \frac{1}{h^{6/5}} \int dX \left| \int dr \alpha_0 (r/h) (W (X+r/2) + W(X-r/2)) \wt{\xi}_s (X,r) \right|^{6/5} \\ \leq \; & \frac{1}{h^{6/5}} \left[ \int dr \alpha_0 (r/h) \right]^{1/5}  \, \int dX dr \alpha_0 (r/h) |W (X+r/2) + W(X-r/2)|^{6/5}  |\wt{\xi}_s (X,r)|^{6/5} \\ 
\leq \; &\frac{C}{h^{3/5}} \left[ \int dX dr \alpha_0 (r/h) |W (X+r/2) + W(X-r/2)|^{3} \right]^{2/5} \\ &\hspace{5cm} \times 
\left[ \int dX dr \alpha_0 (r/h) |\wt{\xi}_s (X,r)|^2 \right]^{3/5} \\
\leq &C h^{3/5}  \| W \|_{3}^{6/5} \| \wt{\xi}_s \|_2^{6/5} \leq C h^{6/5} 
\end{split} \]
for a constant $C$ depending on $\| \alpha_0 \|_1$, on $\| \alpha_0 \|_\infty$, and on $\| W \|_{H^1}$. Here we used again Proposition \ref{prop} to bound $\| \wt{\xi}_s \|_2 \leq C h^{1/2}$ and the assumptions on $W$. We conclude that
\begin{equation}\label{eq:W-term}
\begin{split}
\frac{1}{h} \int dr \, \alpha_0 (r/h) (W (X+r/2) + W(X-r/2)) &e^{is E_b/h^2} \wt{\alpha}_s (X,r)  = 2 W(X) \psi_s (X) + \sum_{j=1}^3 G^{(0)}_{j,s} (X)  \end{split} \end{equation}
where $\| G^{(0)}_{j,s} \|_{6/5} \leq C h$, for all $j=1,2,3$ and for a constant $C>0$ independent of $h$ and of time $s$. We consider now the third term on the r.h.s. of (\ref{eq:wtpsi-123}). Using Proposition \ref{prop}, we find
\[ \begin{split}  A := \; &-\frac{1}{h^3} \int dr dz \alpha_0 (r/h)  \gamma_s (X+r/2, X+z -r/2)  V(z/h) e^{i s E_b/h^2} \wt{\alpha}_s (X + (z-r)/2, z) \\  = \; & -\frac{1}{h^5} \int dr dz  \alpha_0 (r/h)  \gamma_s (X+r/2, X+z -r/2) (V\alpha_0) (z/h) \psi_s ( X+(z-r)/2)  \\
&-\frac{1}{h^3} \int dr dz \, \alpha_0 (r/h) \gamma_s (X+r/2, X+z -r/2) V (z/h)e^{is E_b/h^2} \wt{\xi}_s (X+ (z-r)/2, z) \end{split} \]
Next, we write $\gamma_s = \bar{\alpha}_s \alpha_s + (\gamma_s - \bar{\alpha}_s \alpha_s)$. Here $\gamma_s - \bar{\alpha}_s \alpha_s$ is a positive operator (since $\gamma_s (1-\gamma_s) \geq \bar{\alpha}_s \alpha_s$ implies that $\gamma_s - \bar{\alpha}_s \alpha_s \geq \gamma_s^2$), with $\tr \, (\gamma_s - \bar{\alpha}_s \alpha_s) \leq C h$ (compared with $\tr \, \bar{\alpha}_s \alpha_s \simeq C h^{-1}$); we will use this fact to show, in Lemma \ref{lm:Gj}, that the contribution from $ (\gamma_s - \bar{\alpha}_s \alpha_s)$ is small, in the limit $h\to 0$. We find
\[ \begin{split} A = \; & -\frac{1}{h^5} \int dr dz dw \, \alpha_0 (r/h) \overline{\alpha}_s (X+r/2, X+r/2 + w) \alpha_s (X+r/2 + w, X + z -r/2) (V\alpha_0) (z/h) \\ & \hspace{7cm} \times  \psi_s ( X+(z-r)/2) \\
\; & -\frac{1}{h^5} \int dr dz \, \alpha_0 (r/h) (\gamma_s - \bar{\alpha}_s \alpha_s) ( X+r/2, X+z - r/2) V(z/h)  \psi_s ( X+(z-r)/2) \alpha_0 (z/h) \\ &-\frac{1}{h^3} \int dr dz \, \alpha_0 (r/h)  \gamma_s (X+r/2, X+z -r/2) V(z/h) e^{is E_b/h^2}\wt{\xi}_s (X+ (z-r)/2, z) \end{split} \]
We express again the kernels $\alpha_s$ using center of mass and relative coordinates. We find
\[ \begin{split}  
A =  \; & -\frac{1}{h^5} \int dr dz dw \, \alpha_0 (r/h) \overline{\wt{\alpha}_s (X+(r+w)/2 , w)} \wt{\alpha}_s (X + (z+w)/2,  r + w - z) (V\alpha_0) (z/h) \\ &\hspace{7cm} \times \psi_s ( X+(z-r)/2) \\
\; & -\frac{1}{h^5} \int dr  dz \, \alpha_0 (r/h) (\gamma_s - \bar{\alpha}_s \alpha_s) ( X+r/2, X+z - r/2) V(z/h) \psi_s ( X+(z-r)/2) \alpha_0 (z/h) \\ &-\frac{1}{h^3} \int dr dz \,  \alpha_0 (r/h)  \gamma_s (X+r/2, X+z -r/2) V(z/h)  e^{is E_b/h^2} \wt{\xi}_s (X+ (z-r)/2, z) \end{split} \]
Finally, we use again Proposition \ref{prop} to rewrite the kernels $\wt{\alpha}_s$.
We find that
\[ \begin{split}
A = \; & -\frac{1}{h^7} \int dr dzdw \, \alpha_0 (r/h) e^{is E_b/h^2} \overline{\psi}_s (X+(r +z)/2) \alpha_0 (w/h)  \wt{\alpha}_s (X + (z+w)/2,  r + w - z) \\ &\hspace{7cm} \times (V\alpha_0) (z/h)  \psi_s ( X+(z-r)/2) \\ &-\frac{1}{h^5} \int dr dz dw \,  \alpha_0 (r/h) e^{isE_b/h^2} \overline{\wt{\xi}_s (X+(r+w)/2 , w)} \wt{\alpha}_s (X + (z+w)/2,  r + w - z) \\ &\hspace{7cm} \times (V\alpha_0) (z/h) \psi_s ( X+(z-r)/2) \\
\; & -\frac{1}{h^5} \int dr dz \,  \alpha_0 (r/h)(\gamma_s - \bar{\alpha}_s \alpha_s) ( X+r/2, X+z - r/2) (V\alpha_0) (z/h) \psi_s ( X+(z-r)/2) \\ &-\frac{1}{h^3} \int dr dz \, \alpha_0 (r/h)  \gamma_s (X+r/2, X+z -r/2)  V(z/h)  \wt{\xi}_s (X+ (z-r)/2, z) \end{split} \]
and we conclude
\begin{equation}\label{eq:error-1} 
\begin{split}  
A = \; & -\frac{1}{h^9} \int dr dz dw \, \alpha_0 (r/h) \alpha_0 (w/h)  \overline{\psi}_s (X+(r+z)/2)  \psi_s (X + (z+w)/2) \alpha_0 ((r + w - z)/h) \\ &\hspace{7cm} \times  (V\alpha_0) (z/h)  \psi_s ( X+(z-r)/2) \\ 
 & -\frac{1}{h^7} \int dr dz dw \, \alpha_0 (r/h) \overline{\psi}_s (X+ (r+z)/2) \alpha_0 (w/h)  \wt{\xi}_s (X + (z+w)/2,  r + w - z)  \\ &\hspace{7cm} \times (V\alpha_0) (z/h) \psi_s ( X+(z-r)/2) \\ &-\frac{1}{h^7} \int dr dz dw \, \alpha_0 (r/h)  \overline{\wt{\xi}_s (X+(r+w)/2 , w)} \alpha_0 ((r+w-z)/h) \psi_s (X+(z+w)/2)
\\ & \hspace{7cm} \times  (V\alpha_0) (z/h) \psi_s ( X+(z-r)/2)   \\ &-\frac{1}{h^5} \int dr dz dw \, \alpha_0 (r/h)  \overline{\wt{\xi}_s (X+(r+w)/2 , w)}  \wt{\xi}_s (X + (z+w)/2,  r + w - z) \\ & \hspace{7cm} \times  (V\alpha_0) (z/h) \psi_s ( X+(z-r)/2)  \\
\; & -\frac{1}{h^5} \int drdz  \, \alpha_0 (r/h) (\gamma_s - \bar{\alpha}_s \alpha_s) ( X+r/2, X+z - r/2) (V\alpha_0) (z/h) \psi_s ( X+(z-r)/2)  \\ &-\frac{1}{h^3} \int dr dz \, \alpha_0 (r/h)  \gamma_s (X+r/2, X+z -r/2) V(z/h) \wt{\xi}_s (X+ (z-r)/2, z) \\
=: \; & M (X) + \sum_{j=1}^5 G^{(1)}_{j,s} (X) \end{split} \end{equation}
We decompose further the main term $M(X)$, writing 
\[ \begin{split} 
M(X) = \; & -\frac{1}{h^9} \int dr dz dw \, \alpha_0 (r/h)  \alpha_0 (w/h)  \alpha_0 ((r + w - z)/h) (V\alpha_0) (z/h)  \\ &\hspace{2cm} \times \overline{\psi}_s (X+r/2 +z/2) \psi_s (X + z/2 + w/2) \psi_s ( X-r/2+z/2) \\
= \; &- \int dr dz dw \, \alpha_0 (r) \alpha_0 (w) \alpha_0 (r+w-z) (V\alpha_0) (z) \\ &\hspace{2cm} \times 
\overline{\psi}_s (X+h(r+z)/2) \psi_s (X + h(z + w)/2) 
\psi_s (X +h (z- r)/2) \\ 
= \; &- \psi_s (X)  \int dr dz dw \, \alpha_0 (r) \alpha_0 (w) \alpha_0 (r+w-z) (V\alpha_0) (z)\\ &\hspace{2cm} \times \overline{\psi}_s (X+h(r+z)/2) \psi_s (X + h(z+w)/2) 
\\ &- \int dr dz dw \, \alpha_0 (r) \alpha_0 (w) \alpha_0 (r+w-z) (V\alpha_0) (z) \\ &\hspace{2cm} \times 
\overline{\psi}_s (X+h(r+z)/2) \psi_s (X + h(z+w)/2) \left( \psi_s (X+ h(z-r)/2) - \psi_s (X) \right)  \\ 
= \; &- \psi_s (X)^2  \int dr dz dw \, \alpha_0 (r) \alpha_0 (w) \alpha_0 (r+w-z) (V\alpha_0) (z) \overline{\psi}_s (X+h(r+z)/2)  \\
&- \psi_s (X)  \int dr dz dw \, \alpha_0 (r) \alpha_0 (w) \alpha_0 (r+w-z) (V\alpha_0) (z) 
\\ &\hspace{2cm} \times \overline{\psi}_s (X+h(r+z)/2) \left(\psi_s (X + h(z+w)/2) - \psi_s (X) \right)\\
 &- \int dr dz dw \, \alpha_0 (r) \alpha_0 (w) \alpha_0 (r+w-z) (V\alpha_0) (z) \\ &\hspace{2cm} \times
\overline{\psi}_s (X+h(r+z)/2) \psi_s (X + h(z+w)/2) 
\left( \psi_s (X+ h(z-r)/2) - \psi_s (X) \right)  \end{split}\]
In the first term, we isolate again the main contribution. We find
\[ \begin{split} 
M(X) = \; &- |\psi_s (X)|^2 \psi_s (X) \, \int dr dz dw \, \alpha_0 (r) \alpha_0 (w) \alpha_0 (r+w-z) (V\alpha_0) (z) \\
&- \psi_s (X)^2  \int dr dz dw \, \alpha_0 (r) \alpha_0 (w) \alpha_0 (r+w-z) (V\alpha_0) (z) 
\left(\overline{\psi}_s (X+h(r+z)/2) - \overline{\psi}_s (X) \right) \\
&- \psi_s (X)   \int dr dz dw \, \alpha_0 (r) \alpha_0 (w) \alpha_0 (r+w-z) (V\alpha_0) (z) 
\\ &\hspace{2cm} \times \overline{\psi}_s (X+h(r+z)/2) \left(\psi_s (X + h(z+w)/2) - \psi_s (X) \right)\\  &- \int dr dz dw \, \alpha_0 (r) \alpha_0 (w) \alpha_0 (r+w-z) (V\alpha_0) (z) 
\\ &\hspace{2cm} \times \overline{\psi}_s (X+h(r+z)/2) \psi_s (X + h(z+w)/2) 
\left( \psi_s (X + h(z-r)/2) - \psi_s (X) \right) \end{split} \]
We observe that
\[\begin{split} - \int dr &dz dw \, \alpha_0 (r) \alpha_0 (w) \alpha_0 (r+w-z) (V\alpha_0) (z)  \\ & =  - \int dz \left( \alpha_0 * \alpha_0 * \alpha_0 \right) (z)  \, (V\alpha_0) (z)  = \frac{1}{(2\pi)^3} \int dp \, |\widehat{\alpha}_0 (p)|^4 (2 p^2 + E_b) = g \end{split} \] where we used the eigenvalue equation $(-2 \Delta + V) \alpha_0 = - E_b \alpha_0$ to write $V\alpha_0 = - E_b \alpha_0 + 2 \Delta \alpha_0$ and therefore $\widehat{V\alpha_0} (p) = - (E_b+2p^2) \widehat{\alpha}_0 (p)$. We conclude that 
\[ M (X) = g |\psi (X)|^2 \psi (X) + \sum_{j=1}^3 G^{(2)}_{j,s} (X) \]
with the three error terms
\begin{equation}\label{eq:G0j} \begin{split}  G^{(2)}_{1,s} (X) &= - \psi_s (X)^2  \int dr dz dw \, \alpha_0 (r) \alpha_0 (w) \alpha_0 (r+w-z) (V\alpha_0) (z) 
\left(\overline{\psi}_s (X+h(r+z)/2) - \overline{\psi}_s (X) \right) \\
G^{(2)}_{2,s} (X) &= - \psi_s (X)   \int dr dz dw \, \alpha_0 (r) \alpha_0 (w) \alpha_0 (r+w-z) (V\alpha_0) (z) 
\\ &\hspace{2cm} \times \overline{\psi}_s (X+h(r+z)/2) \left(\psi_s (X + h(z+w)/2) - \psi_s (X) \right) \\
G^{(2)}_{3,s} (X) &=  - \int dr dz dw \, \alpha_0 (r) \alpha_0 (w) \alpha_0 (r+w-z) (V\alpha_0) (z) 
\\ &\hspace{2cm} \times \overline{\psi}_s (X+h(r+z)/2) \psi_s (X + h(z+w)/2) 
\left( \psi_s (X + h(z-r)/2) - \psi_s (X) \right)
\end{split} \end{equation}
Together with (\ref{eq:error-1}), this implies that
\[ \begin{split}  A &= -\frac{1}{h^3} \int dr dz \alpha_0 (r/h) \gamma_s (X + r/2, X+z-r/2)  V(z/h) e^{i s E_b/h^2} \wt{\alpha}_s (X-r/2 + z/2, z) \\ &= g \, |\psi_s (X)|^2 \psi_s (X) + \sum_{j=1}^5 G^{(1)}_{j,s} (X) + \sum_{j=1}^3 G^{(2)}_{j,s} (X) \end{split} \]
In Lemmas \ref{lm:G0j} and \ref{lm:Gj} below, we show that $\| G^{(1)}_{j,s} \|_{6/5} \leq C h$ for $j =1,2$, $\| G^{(1)}_{j,s} \|_{6/5} \leq C h^{1/2}$ for $j=3,5$, $\| G^{(1)}_{4,s} \|_{6/5} \leq C h^{3/2}$, and $\| G^{(2)}_{j,s} \|_2 \leq C h$ for all $j=1,2,3$, for a constant $C$ independent of $h$ and of $s$.  In particular, $\| G^{(i)}_{j,s} \|_{6/5} \leq C h^{1/2}$ for all $i=1,2$ and $j$'s. Similarly, one can write the last term on the r.h.s. of (\ref{eq:wtpsi-123}) as 
\[ \begin{split} -\frac{1}{h^3} \int dr dz \alpha_0 (r/h) & \gamma_s (X - r/2, X-z+r/2)  V(z/h) e^{i s E_b/h^2} \wt{\alpha}_s (X+r/2 - z/2, z) \\ &= g |\psi_s (X)|^2 \psi_s (X) + \sum_{j=1}^5 G^{(3)}_{j,s} (X) + \sum_{j=1}^3 G^{(4)}_{j,s} (X) \end{split} \]
where $\| G^{(i)}_{j,s} \|_{6/5} \leq C h^{1/2}$ for $i=3,4$ and all $j$'s, and for a constant $C$ independent of $h$ and of $s$. Therefore (\ref{eq:wtpsi-123}) and (\ref{eq:W-term}) imply that 
\[  \begin{split} 
\psi_t (X) = \; &U_X (t)  \psi_0 (X)  + \frac{2}{i} \int_0^t ds \, U_X (t-s) W(X) \psi_s (X) \\ &+ \frac{2 g}{i} \int_0^t \, ds \, U_X (t-s) |\psi_s (X)|^2 \psi_s (X) + \frac{1}{i} \int_0^t ds \, U_X (t-s) F (s) \end{split} \]
with the error term $F = \sum_{j=1}^3 G^{(0)}_{j,s} + \sum_{j=1}^{5} (G^{(1)}_{j,s} + G^{(3)}_{j,s}) + \sum_{j=1}^3 (G^{(2)}_{j,s} +G^{(4)}_{j,s})$ is such that $\| F (s) \|_{6/5} \leq C h^{1/2}$. 
We compare $\psi_t (X)$ with the solution of the Gross-Pitaevskii equation (\ref{eq:GP}), which can be rewritten in integral form as
 \[  \ph_t (X) = \; U_X (t)  \ph_0 (X) + \frac{2}{i} \int_0^t ds \, U_X (t-s) W(X) \ph_s (X)   + \frac{2 g}{i} \int_0^t \, ds U_X (t-s) |\ph_s (X)|^2 \ph_s (X) \]
The difference between $\psi_t$ and $\ph_t$ is therefore given by 
\[ \begin{split} \psi_t - \ph_t = \; &U_X (t) (\psi_0 - \ph_0) -2i \int_0^t ds \, U_X (t-s) W (\psi_s - \ph_s) \\ &-2ig \int_0^t ds \, U_X(t-s) \left( |\psi_s |^2 \psi_s - |\ph_s|^2 \ph_s \right) -i  \int_0^t ds \, U_X (t-s) F (s) \end{split} \]
We use Strichartz estimates (see Appendix \ref{sec:strich}) to bound this difference. We find (since we will later iterate the next bound, we keep track of the difference of the initial data; at the end we'll set $\psi_0 = \ph_0$) 
\[ \begin{split} \| \psi_t - &\ph_t \|_{L^\infty ([0,T] ; L^2 (\bR^3))} \\ \leq \; & \| \psi_0 -\ph_0\|_2 + 
2 \| W (\psi_s - \ph_s) \|_{L^2 ([0,T], L^{6/5} (\bR^3))}  + 2g 
\left\| |\psi_s|^2 \psi_s - |\ph_s|^2 \ph_s \right\|_{L^2 ([0,T], L^{6/5} (\bR^3))} \\ &
+ \| F \|_{L^2 ([0,T], L^{6/5} (\bR^3))} \\
\leq \; & \| \ph_0 -\psi_0 \|_2 + 2 \left( \int_0^T ds \| W (\psi_s - \ph_s) \|_{6/5}^2 \right)^{1/2} +
\left( \int_0^T ds \|  |\psi_s|^2 \psi_s - |\ph_s|^2 \ph_s \|^2_{6/5} \right)^{1/2} \\ & + \left( \int_0^T  ds \, \| F (s) \|_{6/5}^2 \right)^{1/2} \\
\leq \; & \| \ph_0 -\psi_0 \|_2 + 2 T^{1/2} \| W \|_3 \| \psi_s - \ph_s \|_{L^\infty ([0,T] ; L^2 (\bR^3))}
+ T^{1/2} \sup_{s \in [0,T]} \| |\psi_s|^2 \psi_s -|\ph_s|^2 \ph_s \|_{6/5} \\ &+ C T^{1/2} h^{1/2} 
 \end{split} \]
Next, we estimate the difference $\| |\psi_s|^2 \psi_s - |\ph_s|^2 \ph_s \|_{6/5}$ in terms of $\| \psi_s - \ph_s \|_2$, using the a-priori bounds for $\| \psi_s \|_{H^1}$ (given by Proposition \ref{prop}) and $\| \ph_s \|_{H^1}$ (see the remark after Theorem \ref{thm:main}).
\[ \begin{split}
\| \psi_t - &\ph_t \|_{L^\infty ([0,T] ; L^2 (\bR^3))} \\
\leq \; &  \| \ph_0 -\psi_0 \|_2 +  2 T^{1/2} \| W \|_3 \| \psi_s - \ph_s \|_{L^\infty ([0,T] ; L^2 (\bR^3))} +C T^{1/2} h^{1/2} \\ &+ T^{1/2} \sup_{s \in [0,T]} 
\left[ \| |\psi_s|^2 (\psi_s - \ph_s) \|_{6/5} + \| \psi_s \ph_s (\psi_s - \ph_s) \|_{6/5} + \| |\ph_s|^2 (\psi_s - \ph_s) \|_{6/5} \right] \\
\leq \;& \| \ph_0 - \psi_0 \|_2 + C T^{1/2}  h^{1/2} +  2 T^{1/2} \| W \|_3 \| \psi_s - \ph_s \|_{L^\infty ([0,T] ; L^2 (\bR^3))} \\ &+ T^{1/2} \sup_{s \in [0,T]}  \left( \| \psi_s \|^2_6 + \| \ph_s \|_6^2 \right)  \,\| \psi_s - \ph_s \|_2 \\ \leq \;&  \| \ph_0 -\psi_0 \|_2 + C T^{1/2}  h^{1/2} + C T^{1/2}  \| \psi_s - \ph_s \|_{L^\infty([0,T],L^2 (\bR^3))}  \end{split} \]
We choose $T_0 > 0$ such that the factor $CT^{1/2}$ appearing in front of the last term on the r.h.s. of the last equation is less than, say, $1/2$ (observe that $T_0$ depends only on the bounds for the $H^1$ norm of $\psi_s$ and of $\| W \|$ and is therefore independent of $h$ and $s$). Then we have
\[ \| \psi_s - \ph_s \|_{L^\infty([0,T],L^2)} \leq 2 \| \psi_0 - \ph_0\|_2 + Ch^{1/2} \]
for all $0\leq T \leq T_0$. For $t \in [nT_0, (n+1) T_0]$, we find analogously (since all bounds are uniform in time) 
\[ \begin{split} \| \psi_t - \ph_t\|_2 &\leq 2 \| \psi_{nT_0} - \ph_{nT_0} \|_2 + Ch^{1/2} \leq 4 \| \psi_{(n-1)T_0} - \ph_{(n-1)T_0} \|_2 +2 Ch^{1/2} + Ch^{1/2}  \\ &\leq  8 \| \psi_{(n-2)T_0} - \ph_{(n-2) T_0} \|_2 +4Ch^{1/2} + 2 Ch^{1/2} + Ch^{1/2} \leq 2^{n+1} \| \psi_0 - \ph_0 \|_2 + 2^{n+1} C h^{1/2} \end{split} \]
For initial data $\psi_0 = \ph_0$, we conclude that
\[ \| \psi_t - \ph_t \|_2 \leq C h^{1/2} e^{ct}  \]
for all $t \in \bR$. This concludes the proof of Theorem \ref{thm:main}.

\section{Control of error terms}

In this section, we estimate the size of the error terms from (\ref{eq:error-1}) and (\ref{eq:G0j}).

\begin{lemma} \label{lm:G0j}
Consider the error terms 
\begin{equation}\begin{split}  G^{(2)}_{1,s} (X) &= - \psi_s (X)^2  \int dr dz dw \, \alpha_0 (r) \alpha_0 (w) \alpha_0 (r+w-z) (V\alpha_0) (z) 
\left(\overline{\psi}_s (X+h(r+z)/2) - \overline{\psi}_s (X) \right) \\
G^{(2)}_{2,s} (X) &= - \psi_s (X)   \int dr dz dw \, \alpha_0 (r) \alpha_0 (w) \alpha_0 (r+w-z) (V\alpha_0) (z) 
\\ &\hspace{2cm} \times \overline{\psi}_s (X+h(r+z)/2) \left(\psi_s (X + h(z+w)/2) - \psi_s (X) \right) \\
G^{(2)}_{3,s} (X) &=  - \int dr dz dw \, \alpha_0 (r) \alpha_0 (w) \alpha_0 (r+w-z) (V\alpha_0) (z) 
\\ &\hspace{2cm} \times \overline{\psi}_s (X+h(r+z)/2) \psi_s (X + h(z+w)/2) 
\left( \psi_s (X + h(z-r)/2) - \psi_s (X) \right)
\end{split} \end{equation}
as defined in (\ref{eq:G0j}). Then $\| G_{0,j} \|_{6/5} \leq C h$, for a constant $C$ independent of $h$ and of the time $s$.
\end{lemma}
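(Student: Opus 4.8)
The plan is to exploit the fact that each term $G^{(2)}_{j,s}$ contains exactly one difference of the form $\psi_s(X + h\,v/2) - \psi_s(X)$, where $v$ is one of $r+z$, $z+w$, or $z-r$. By the fundamental theorem of calculus this difference equals $\tfrac{h}{2}\int_0^1 d\kappa\, v\cdot\nabla\psi_s(X+h\kappa v/2)$, exactly as was used for $G^{(0)}_{1,s}$ in Section \ref{sec:proof}. Inserting this identity extracts the desired factor $h$ and trades the difference for a single factor of $\nabla\psi_s$ evaluated at a shifted argument, at the cost of an extra $\kappa$-integration and a first-order polynomial weight ($|r+z|$, $|z+w|$, or $|z-r|$) in the $r,z,w$ variables. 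After this step each $G^{(2)}_{j,s}(X)$ is bounded pointwise by $\tfrac{h}{2}$ times an integral over $(r,z,w,\kappa)$ of the nonnegative weight $|\alpha_0(r)\alpha_0(w)\alpha_0(r+w-z)(V\alpha_0)(z)|$ (carrying the polynomial factor) multiplied by a product of two factors $|\psi_s|$ and one factor $|\nabla\psi_s|$, each evaluated either at $X$ or at a point translated by a multiple of $h$.

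First I would apply Minkowski's integral inequality to move the $L^{6/5}(dX)$ norm inside the $(r,z,w,\kappa)$ integrals. This reduces the task to estimating, for each fixed $(r,z,w,\kappa)$, the $L^{6/5}$ norm of a product of the form $|\psi_s(\cdot)|\,|\psi_s(\cdot + a)|\,|\nabla\psi_s(\cdot + b)|$, with $a,b$ the relevant $h$-dependent shifts (for $G^{(2)}_{1,s}$ one has $a=0$). By H\"older's inequality with exponents $(6,6,2)$, valid since $1/6+1/6+1/2 = 5/6$, this norm is at most $\|\psi_s\|_6\,\|\psi_s(\cdot+a)\|_6\,\|\nabla\psi_s(\cdot+b)\|_2$. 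The decisive simplification is that the $L^p$ norms are translation invariant, so the shifts drop out and the bound collapses to $\|\psi_s\|_6^2\,\|\nabla\psi_s\|_2$. By Sobolev embedding $\|\psi_s\|_6 \leq C\|\psi_s\|_{H^1}$, and Proposition \ref{prop} gives $\|\psi_s\|_{H^1}\leq C$ uniformly in $h$ and $s$, so this factor is uniformly bounded.

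It then remains to verify that the weight integral $\int dr\,dz\,dw\, P(r,z,w)\,|\alpha_0(r)\alpha_0(w)\alpha_0(r+w-z)(V\alpha_0)(z)|$ is finite, where $P$ is the relevant first-order polynomial. I would bound $P$ by a sum of terms each containing a single factor $|r|$, $|z|$, or $|w|$, and integrate out $w$ first. When no $|w|$ is present the correlation $\int dw\,|\alpha_0(w)|\,|\alpha_0(w+(r-z))| \leq \|\alpha_0\|_2^2 = 1$ disposes of the $w$-integral uniformly, while the case carrying a $|w|$ factor is handled by $\int dw\,|w|\,|\alpha_0(w)|\,|\alpha_0(w+(r-z))| \leq \||w|\alpha_0\|_2\,\|\alpha_0\|_2$. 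The surviving $\int dr\,dz$ factorizes into products of $\|\alpha_0\|_1$, $\||r|\alpha_0\|_1$, $\|V\alpha_0\|_1$ and $\||z|(V\alpha_0)\|_1$, all finite by Condition 1 (which gives $\alpha_0, V\alpha_0 \in L^1$ and, through $|x|^6\alpha_0 \in L^\infty$, enough decay for the first moments to converge). Combining the three steps yields $\|G^{(2)}_{j,s}\|_{6/5} \leq Ch$ for $j=1,2,3$, uniformly in $s$.

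The only real obstacle is bookkeeping: the three $\psi_s$-type factors sit at three distinct translated arguments, so a naive pointwise H\"older estimate is not available. The resolution is the pairing of Minkowski's integral inequality with the translation invariance of the $L^p$ norms, which decouples the shifts from the estimate and lets the $h$-independent Sobolev bounds on $\psi_s$ carry the argument; the remaining weight integrals are then elementary consequences of the decay of $\alpha_0$ guaranteed by Condition 1.
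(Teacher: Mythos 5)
Your proof is correct, and it reaches the paper's conclusion by a genuinely different organization of the same basic ingredients. The paper also starts from the fundamental-theorem-of-calculus identity to extract the factor $h$, but it then works on the joint measure $dX\,dr\,dz\,dw\,d\kappa$ weighted by $\alpha_0(r)\alpha_0(w)\alpha_0(r+w-z)|(V\alpha_0)(z)|$: a first H\"older with exponents $6$ and $6/5$ separates the polynomial factor (raised to the sixth power) from the $\psi_s$-product, a second H\"older splits $|\psi_s|^{12/5}\,|\nabla\psi_s|^{6/5}$ into an $L^6$ piece and an $L^2$ piece, and the shifted arguments are removed by the change of variables $X\mapsto X-\kappa h(r+z)/2$. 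Your route --- Minkowski's integral inequality to pull the $L^{6/5}(dX)$ norm inside the parameter integrals, then the $(6,6,2)$ H\"older pointwise in $(r,z,w,\kappa)$, then translation invariance of the $L^p$ norms --- yields the same bound $C h\,\|\psi_s\|_{H^1}^3$ via the uniform $H^1$ control of Proposition \ref{prop}. The one substantive difference is in what each argument demands of the weight: your version needs only \emph{first} moments of $\alpha_0$ and $V\alpha_0$ against the convolution structure, whereas the paper's first H\"older step places $|r+z|^6$ (resp.\ $|z+w|^6$, $|z-r|^6$) inside the weight and hence requires sixth moments, which is where the decay hypothesis $|x|^6\alpha_0\in L^\infty$ of Condition 1 enters. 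In that sense your argument is marginally more economical; otherwise the two proofs are parallel.
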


\begin{proof}
With \[ \begin{split} \psi_s (X+h(r+z)/2) - \psi_s (X) &= \int_0^1 d\kappa \, \frac{d}{d\kappa} \psi_s (X+h\kappa (r+z)/2)\\ & = h \int_0^1 d\kappa \, \nabla \psi_s (X+h(r+z)/2) \cdot (r+z)\end{split} \]
we find   
\[ \begin{split} |G^{(2)}_{1,s} (X)| \leq \; & h |\psi_s (X)|^2 \int dr dz dw \int_0^1 d\kappa \, |r+z| \, \alpha_0 (r) \alpha_0 (w) \alpha_0 (r+w-z) \\ & \hspace{5cm} \times  |(V\alpha_0) (z)| \, \left| \nabla \psi_s (X + \kappa h (r+z)/2) \right| \end{split} \]
Hence, using H\"older inequality with exponents $6$ and $6/5$ (with respect to the integration measure $dr dz dw d\kappa \, \alpha_0 (r) \alpha_0 (w) \alpha_0 (r+w-z) |(V\alpha_0)(z)| $), we find
\[ \begin{split} \| G^{(2)}_{1,s} \|_{6/5}^{6/5} \leq \; & h^{6/5}  \left( \int dr dw dz \int_0^1 d\kappa \, |r+z|^6 \alpha_0 (r) \alpha_0 (w) \alpha_0 (r+w-z) |(V\alpha_0)(z)| \right)^{1/5} \\ & \times  \int dX dr dz dw  \int_0^1 d\kappa  \, \alpha_0 (r) \alpha_0 (w) \alpha_0 (r+w-z) |(V\alpha_0)(z)| \\ &\hspace{3cm} \times   |\psi_s (X)|^{12/5} \left| \nabla \psi_s (X+\kappa h (r+z)/2) \right|^{6/5} 
\\ \leq \; & Ch^{6/5} \left[  \int dX dr dz dw \int_0^1 d\kappa \, \alpha_0 (r) \alpha_0 (w) \alpha_0 (r+w-z) |(V\alpha_0)(z)| |\psi_s (X)|^{6} \right]^{2/5}  \\ & \times 
 \Big[  \int dX dr dz dw \int_0^1 d\kappa \, \alpha_0 (r) \alpha_0 (w) \alpha_0 (r+w-z) |(V\alpha_0)(z)|  \\ & \hspace{7cm} \times \left| \nabla \psi_s (X+\kappa h (r+z)/2)  \right|^2 \Big]^{3/5}
 \\ \leq \; & C h^{6/5} \| \psi_s \|_{H^1}^{6/5} \end{split} \]
for a constant $C$ depending on $\| \alpha_0\|_1$, $\| \alpha_0 \|_\infty$, $\| r^6 \alpha_0 \|_1$, and $\| V\alpha_0 \|_1$ (in the second integral, we shifted the variable $X \to X-\kappa h (r+z)/2$). Similarly, we bound
\[ \begin{split}  |G^{(2)}_{2,s} (X)| \leq \; & h |\psi_s (X)| \int dr dz dw \int_0^1 d\kappa \,  |w+z| \, \alpha_0 (r) \alpha_0 (w) \alpha_0 (r+w-z) |(V\alpha_0) (z)| \\ &\hspace{4cm} \times  |\psi_s (X + h (r+z)/2)|  \, \left|\nabla \psi_s (X + \kappa h (w+z)/2) \right| \end{split} \]
which gives
\[ \begin{split} \| G^{(2)}_{2,s} \|_{6/5}^{6/5} \leq \; & h^{6/5} \left( \int dr dz dw \int_0^1 d\kappa \, |w+z|^6 \alpha_0 (r) \alpha_0 (w) \alpha_0 (r+w-z) |(V\alpha_0) (z)| \right)^{1/5} \\ &\times \int dX dr dz dw \int_0^1 d\kappa \, \alpha_0 (r) \alpha_0 (w) \alpha_0 (r+w-z) |(V\alpha_0) (z)| \\ &\hspace{1cm} \times  |\psi_s (X)|^{6/5} |\psi_s (X + h (r+z)/2)|^{6/5} \left|\nabla \psi_s (X + \kappa h (w+z)/2) \right|^{6/5}  \\ \leq \; & C h^{6/5} \left[ \int dX dr dz dw \int_0^1 d\kappa 
\alpha_0 (r) \alpha_0 (w) \alpha_0 (r+w-z) |(V\alpha_0) (z)|  |\psi_s (X)|^{6}  \right]^{1/5} 
\\ &\times \left[ \int dX dr dz dw \int_0^1 d\kappa 
\alpha_0 (r) \alpha_0 (w) \alpha_0 (r+w-z) |(V\alpha_0) (z)|  |\psi_s (X+h (r+z)/2)|^{6}  \right]^{1/5} 
\\ &\times \left[ \int dX dr dz dw \int_0^1 d\kappa 
\alpha_0 (r) \alpha_0 (w) \alpha_0 (r+w-z) |(V\alpha_0) (z)| \right. \\ &\left. \hspace{7cm} \times  \left|\nabla \psi_s (X + \kappa h (w+z)/2) \right|^{2} \right]^{3/5}  
\\ \leq \; &C h^{6/5} \| \psi_s \|^{6/5}_{H^1} \end{split} \]
Finally, we bound the term $G^{(2)}_{3,s}$. To this end we proceed analogously, noticing that
\[  \begin{split} |G^{(2)}_{3,s} (X)| \leq  \; &h \int dr dz dw \int_0^1 d\kappa \,  |z-r| \, \alpha_0 (r) \alpha_0 (w) \alpha_0 (r+w-z) |(V\alpha_0) (z)| \\ &\times |\psi_s (X+h(z+w)/2)| \, |\psi_s (X + h (r+z)/2)|  \, \left|\nabla \psi_s (X + \kappa h (z-r)/2) \right| \end{split} \]
and therefore concluding that
\[ \begin{split} \| G^{(2)}_{3,s} & \|_{6/5}^{6/5} \\ \leq \; &h^{6/5} \left( \int dr dz dw \int_0^1 d\kappa \, |z-r|^6 \alpha_0 (r) \alpha_0 (w) \alpha_0 (r+w-z) |(V\alpha_0) (z)| \right)^{1/5}  \\ & \times \int dX dr dz dw \int_0^1 d\kappa 
\alpha_0 (r) \alpha_0 (w) \alpha_0 (r+w-z) |(V\alpha_0) (z)| \\ & \hspace{1cm} \times  |\psi_s (X+h (z+w)/2)|^{6/5}  \, |\psi_s (X + h (r+z)/2)|^{6/5} \left|\nabla \psi_s (X + \kappa h (z-r)/2) \right|^{6/5} 
\\ \leq \; & C h^{6/5}  \left[ \int dX dr dz dw \int_0^1 d\kappa 
\alpha_0 (r) \alpha_0 (w) \alpha_0 (r+w-z) |(V\alpha_0) (z)|  |\psi_s (X+h (z+w)/2)|^{6}  \right]^{1/5} 
\\ &\times \left[ \int dX dr dz dw \int_0^1 d\kappa 
\alpha_0 (r) \alpha_0 (w) \alpha_0 (r+w-z) |(V\alpha_0) (z)|  |\psi_s (X+h (r+z)/2)|^{6}  \right]^{1/5} 
\\ &\times \left[ \int dX dr dz dw \int_0^1 d\kappa 
\alpha_0 (r) \alpha_0 (w) \alpha_0 (r+w-z) |(V\alpha_0) (z)| \left|\nabla \psi_s (X + \kappa h (z-r)/2) \right|^{2} \right]^{3/5}  
\\ \leq \; &C h^{6/5} \| \psi_s \|^{6/5}_{H^1} \end{split} \]
 \end{proof}
 
\begin{lemma} \label{lm:Gj}
Consider, from (\ref{eq:error-1}), the error terms 
\[ \begin{split} 
G^{(1)}_{1,s} (X) = \; &-\frac{1}{h^7} \int dr dz dw \, \alpha_0 (r/h) \overline{\psi}_s (X+(r+z)/2) \alpha_0 (w/h)  \wt{\xi}_s (X + (z+w)/2,  r + w - z)  \\ &\hspace{7cm} \times (V\alpha_0) (z/h) \psi_s ( X+(z-r)/2)\\
G^{(1)}_{2,s} (X) = \; &-\frac{1}{h^7} \int dr dz dw \, \alpha_0 (r/h)  \overline{\wt{\xi}_s (X+(r+w)/2 , w)} \alpha_0 ((r+w-z)/h) \psi_s (X+(z+w)/2)
\\ & \hspace{7cm} \times  (V\alpha_0) (z/h) \psi_s ( X+(z-r)/2)   \\ 
G^{(1)}_{3,s} (X) = \; &-\frac{1}{h^5} \int dr dz dw \, \alpha_0 (r/h)  \overline{\wt{\xi}_s (X+(r+w)/2 , w)}  \wt{\xi}_s (X + (z+w)/2,  r + w - z) \\ & \hspace{7cm} \times  (V\alpha_0) (z/h) \psi_s ( X+(z-r)/2)  \\
G^{(1)}_{4,s} (X) = \; &- \frac{1}{h^5} \int dr dz \alpha_0 (r/h) (V\alpha_0) (z/h) (\gamma_s - \bar{\alpha}_s \alpha_s) (X+r/2, X+z-r/2)  \psi_s (X+(z-r)/2) \\
G^{(1)}_{5,s} (X) = \; &- \frac{1}{h^3} \int dr dz \, \alpha_0 (r/h)  \gamma_s (X+r/2, X+z -r/2) V(z/h) \wt{\xi}_s (X+(z-r)/2,z) 
\end{split} \]
Then $\| G^{(1)}_{j,s} \|_{6/5} \leq C h$ for $j =1,2$ and $\| G_{j,s}^{(1)} \|_{6/5} \leq C h^{1/2}$ for $j =3,4,5$, for a constant $C$ independent of $h$ and of the time $s$. 
\end{lemma}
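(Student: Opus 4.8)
The plan is to treat all five terms by a common scheme and then to isolate the two genuinely operator-theoretic contributions, $G^{(1)}_{4,s}$ and $G^{(1)}_{5,s}$, as the hard cases. Each $G^{(1)}_{j,s}$ is a multilinear integral carrying a negative power of $h$ in front, with the microscopic profiles $\alpha_0(\cdot/h)$, $(V\alpha_0)(\cdot/h)$ and $V(\cdot/h)$ localizing their arguments to the scale $h$. The first step is to rescale these microscopic variables (e.g.\ $r\to hr$, $z\to hz$, $w\to hw$), which turns the profiles into $O(1)$ functions and exposes explicit powers of $h$ through the Jacobians; the macroscopic factors $\psi_s$ and $\wt{\xi}_s$ are then evaluated at arguments differing from the base point by $O(h)$ shifts, which I would absorb using translation invariance of the $L^p$ norms. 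To reach $L^{6/5}$ I would argue by duality against $L^6$ and apply H\"older with conjugate exponents, placing every $\psi_s$ into $L^6$ (controlled by $\|\psi_s\|_{H^1}\le C$ from Proposition \ref{prop} via the Sobolev embedding $H^1(\bR^3)\hookrightarrow L^6(\bR^3)$) and every excitation $\wt{\xi}_s$ into $L^2$ (where $\|\wt{\xi}_s\|_2\le Ch^{1/2}$, again by Proposition \ref{prop}).

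For $G^{(1)}_{1,s}$ and $G^{(1)}_{2,s}$, which contain a single factor of $\wt{\xi}_s$ together with two copies of $\psi_s$ and three microscopic profiles, I would extract $\|\wt{\xi}_s\|_2$ by a Cauchy--Schwarz inequality in the block of integration variables that enters $\wt{\xi}_s$; the change of variables onto the two arguments of $\wt{\xi}_s$ has unit Jacobian, so the extraction is clean and produces exactly $\|\wt{\xi}_s\|_2$. Bounding the two $\psi_s$ factors in $L^6$ and the profiles in $L^1$ and $L^2$, the prefactor together with the microscopic volume factors nets a power $h^{1/2}$, which combined with $\|\wt{\xi}_s\|_2\le Ch^{1/2}$ gives the claimed $h$. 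The term $G^{(1)}_{3,s}$ carries two factors of $\wt{\xi}_s$, and I would pair them by Cauchy--Schwarz to produce $\|\wt{\xi}_s\|_2^2\le Ch$; however, in this term one relative variable is not localized by an $\alpha_0$ profile, so the bookkeeping of $h$-powers is less favourable and one only reaches $h^{1/2}$, which is what the statement requires.

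The main obstacle is $G^{(1)}_{4,s}$ and $G^{(1)}_{5,s}$, where an operator kernel is sandwiched between the profiles and $\psi_s$ (resp.\ $\wt{\xi}_s$). For $G^{(1)}_{4,s}$ I would use that $B_s:=\gamma_s-\bar{\alpha}_s\alpha_s$ is a \emph{nonnegative} operator with $\tr\,B_s\le Ch$ (Proposition \ref{prop}), together with the pointwise kernel bound $|B_s(x,y)|\le B_s(x,x)^{1/2}B_s(y,y)^{1/2}$ valid for positive kernels; since $\alpha_0(r/h)$ and $(V\alpha_0)(z/h)$ force the two arguments of $B_s$ to lie within $O(h)$ of one another, only the near-diagonal part of $B_s$ is probed, and the smallness of the trace supplies the gain needed for $h^{1/2}$ (extracting the full Hilbert--Schmidt norm $\|B_s\|_{\rm HS}\le Ch^{1/2}$ alone is wasteful and yields only an $O(1)$ bound, so the diagonal localization is essential here).

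The term $G^{(1)}_{5,s}$ is the most delicate, because here the operator is the \emph{full} $\gamma_s$, whose trace $\tr\,\gamma_s\le C/h$ is \emph{large}. The point is that one must not use the trace at all, but only the a priori operator bound $0\le\gamma_s\le 1$ (so that $\|\gamma_s\|_{\rm op}\le 1$ and $\|\gamma_s\|_{\rm HS}\le Ch^{-1/2}$); all of the required smallness must then be extracted from the single factor $\wt{\xi}_s$ --- using both $\|\wt{\xi}_s\|_2\le Ch^{1/2}$ and, where a Sobolev embedding is needed, $\|\wt{\xi}_s\|_{H^1}\le Ch^{-1/2}$ --- together with the $h$-scale localization coming from $V(z/h)$. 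Balancing the large prefactor $h^{-3}$ and the Hilbert--Schmidt size of $\gamma_s$ against these factors, while keeping the excitation in $L^2$ and the surviving $\psi_s$ in $L^6$, is where the competing powers of $h$ must be tracked most carefully, and this is the crux of the lemma.
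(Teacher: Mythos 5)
Your treatment of $G^{(1)}_{1,s}$, $G^{(1)}_{2,s}$ and $G^{(1)}_{3,s}$ is essentially the paper's: H\"older against the microscopic measure $\alpha_0(r/h)\alpha_0(w/h)(V\alpha_0)(z/h)\,dr\,dz\,dw$, with every $\psi_s$ placed in $L^6$ and the excitations in $L^2$; for $G^{(1)}_{3,s}$ the paper makes your ``pair the two $\wt{\xi}_s$'' precise by recognizing the $w$-integral as the kernel of the operator $\bar{\xi}_s\xi_s$ and using $\tr\,(\bar{\xi}_s\xi_s)^2\le\|\xi_s\|_2^4\le Ch^2$. The genuine gap is in your plan for $G^{(1)}_{5,s}$. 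You declare that ``one must not use the trace at all, but only $0\le\gamma_s\le1$, so that $\|\gamma_s\|_{\mathrm{HS}}\le Ch^{-1/2}$,'' and propose to extract all smallness from $\wt{\xi}_s$. This cannot close: in the H\"older scheme the kernel of $\gamma_s$ enters through $\int dX\,dr\,dz\,\alpha_0(r/h)V(z/h)|\gamma_s(X+r/2,X+z-r/2)|^2\le Ch^3\,\tr\,\gamma_s^2$, while $\wt{\xi}_s$ can contribute at most $\|\wt{\xi}_s\|_2\le Ch^{1/2}$ or $\|\wt{\xi}_s\|_3\le\|\wt{\xi}_s\|_{H^1}\le Ch^{-1/2}$, and no Schatten norm of $\gamma_s$ is small if one only knows $\|\gamma_s\|_{\rm op}\le1$ and $\tr\,\gamma_s\le C/h$. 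Running the exponents with $\tr\,\gamma_s^2\le C/h$ produces $\|G^{(1)}_{5,s}\|_{6/5}\le Ch^{-1/2}$, i.e.\ a divergent bound. The missing ingredient is precisely the energy-derived estimate of Proposition \ref{prop}, $\tr\,\gamma_s^2\le\tr\,(\gamma_s-\bar{\alpha}_s\alpha_s)\le Ch$, which the paper feeds into the very H\"older scheme you set up (together with $\|\wt{\xi}_s\|_3\le Ch^{-1/2}$) to obtain $\|G^{(1)}_{5,s}\|_{6/5}\le Ch^{1/2}$.

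A secondary inaccuracy concerns $G^{(1)}_{4,s}$. You assert that $\|\gamma_s-\bar{\alpha}_s\alpha_s\|_{\mathrm{HS}}\le Ch^{1/2}$ and that this ``yields only an $O(1)$ bound, so the diagonal localization is essential.'' In fact, since $B_s:=\gamma_s-\bar{\alpha}_s\alpha_s\ge0$, one has $\tr\,B_s^2\le(\tr\,B_s)^2\le Ch^2$, i.e.\ $\|B_s\|_{\mathrm{HS}}\le Ch$, and the paper gets $\|G^{(1)}_{4,s}\|_{6/5}\le Ch^{1/2}$ directly from this Hilbert--Schmidt bound. Your pointwise inequality $|B_s(x,y)|\le B_s(x,x)^{1/2}B_s(y,y)^{1/2}$, once integrated, merely reproves $\tr\,B_s^2\le(\tr\,B_s)^2$, so it is a detour rather than an error; but the underlying miscounting of the Hilbert--Schmidt norm is the same oversight (not exploiting positivity together with the small trace) that derails your argument for $G^{(1)}_{5,s}$.
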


\begin{proof}
We start with the error term $G^{(1)}_{1,s}$. We have, using H\"older inequality with the dual exponents $6$ and $6/5$, 
\[ \begin{split} 
\| G_{1,s}^{(1)} \|_{6/5}^{6/5} = \; & \frac{1}{h^{42/5}} \int dX \left| \int dr dz dw \, \alpha_0 (r/h) \alpha_0 (w/h) (V\alpha_0) (z/h) \right. \\ & \hspace{1cm}  \left. \times \overline{\psi}_s (X+(r+z)/2) \,  \psi (X+(z-r)/2) \wt{\xi}_s (X+ (z+w)/2, r+w -z) \right|^{6/5}  \\ \leq \; & \frac{1}{h^{42/5}} \left[ \int dr dw dz \, \alpha_0 (r/h) \alpha_0 (w/h) (V\alpha_0) (z/h)  \right]^{1/5} \\ & \times \int dX dr dz dw \, \alpha_0 (r/h) \alpha_0 (w/h) (V\alpha_0) (z/h)  \\ & \hspace{1cm}  \times  |\psi_s (X+(r+z)/2)|^{6/5} |\psi (X+(z-r)/2)|^{6/5} |\wt{\xi}_s (X+ (z+w)/2, r+w -z)|^{6/5} \\
\leq \; & \frac{1}{h^{42/5}} \left[ \int dr dw dz \, \alpha_0 (r/h) \alpha_0 (w/h) (V\alpha_0) (z/h)  \right]^{1/5} \\ & \times \left[ \int dX dr dz dw \, \alpha_0 (r/h) \alpha_0 (w/h) (V\alpha_0) (z/h) |\psi_s (X+(r+z)/2)|^{6} \right]^{1/5} 
\\ &\times \left[ \int dX dr dz dw \, \alpha_0 (r/h) \alpha_0 (w/h) (V\alpha_0) (z/h) |\psi_s (X+(z-r)/2)|^{6} \right]^{1/5}
\\ & \times \left[ \int dX dr dz dw \, \alpha_0 (r/h) \alpha_0 (w/h) (V\alpha_0) (z/h) |\wt{\xi}_s (X+(z+w)/2, r+w-z)|^{2} \right]^{3/5} \\
\leq \; & h^{3/5}  \| \alpha_0 \|_\infty^{3/5}  \| \alpha_0 \|_1^{9/5} \| V\alpha_0 \|_1^{6/5}  \| \psi_s \|_6^{12/5}  \| \wt{\xi} \|_2^{6/5} \leq C h^{6/5}
\end{split} \] 
 where we used the bound $\| \wt{\xi}_s \|^2_{2} \leq Ch$ from Proposition \ref{prop}. The term $G^{(1)}_{2,s}$ can be bounded analogously. As for the term $G_{3,s}^{(1)}$, we write $\xi_s (x,y) = \wt{\xi}_s ((x+y)/2, x-y)$, and we notice that  
 \[ \begin{split}
 G_{3,s}^{(1)} (X)  =\; & \frac{1}{h^5} \int dr dz dw \, \alpha_0 (r/h)  (V\alpha_0) (z/h)  \\ & \hspace{1cm} \times \psi_s ( X+(z-r)/2)  \overline{\xi_s (X+ w + r /2 ,X + r/2)}  \xi_s (X + w + r/2, X +z -r/2) \\ 
=\; & \frac{1}{h^5} \int dr dz \, \alpha_0 (r/h)  (V\alpha_0) (z/h)  \psi_s ( X+(z-r)/2) (\overline{\xi}_s \xi_s) (X+r/2, X+z-r/2) 
\end{split} \]
where $(\overline{\xi}_s \xi_s ) (x,y)$ denotes the kernel of the operator $\overline{\xi}_s \xi_s$, i.e. \[ (\overline{\xi}_s \xi_s) (x,y) = \int dz \overline{\xi}_s (x,z) \xi_s (z,y) = \int dz \overline{\xi_s (x,z)} \xi_s (z,y)\] (here we use the fact that, by definition, $\xi_s$ is symmetric, like $\alpha_s$). Therefore
\[ \begin{split} \| G_{3,s}^{(1)} & \|_{6/5}^{6/5} \\ \leq \; & \frac{1}{h^6} \int dX \, \left| \int dr dz \, \alpha_0 (r/h)  (V\alpha_0) (z/h)  \psi_s ( X+(z-r)/2) (\overline{\xi}_s \xi_s) (X+r/2, X -z-r/2)  \right|^{6/5}  \\
\leq \; & \frac{1}{h^6} \left[ \int dr dz \,  \alpha_0 (r/h)  (V\alpha_0) (z/h) \right]^{1/5} \\ 
&\times  \int dX dr dz \, \alpha_0 (r/h)  (V\alpha_0) (z/h)  |\psi_s ( X+(z-r)/2)|^{6/5} |(\overline{\xi}_s \xi_s)  (X+r/2, X+z-r/2)|^{6/5} \\
\leq \; & \frac{\| \alpha_0 \|^{1/5}_1 \| V\alpha_0 \|^{1/5}_1}{h^{24/5}}  
\left[\int dX dr dz \, \alpha_0 (r/h)  (V\alpha_0) (z/h)  |\psi_s ( X+(z-r)/2)|^{3} \right]^{2/5}  \\ & \hspace{2cm} \times \left[ \int dX dr dz \, \alpha_0 (r/h)  (V\alpha_0) (z/h) |(\overline{\xi}_s \xi_s)  (X+r/2, X -z-r/2)|^{2}\right]^{3/5} \\ 
\leq \; & \frac{\| \alpha_0 \|^{3/5}_1 \| \alpha_0 \|_\infty^{3/5} \| V\alpha_0 \|^{6/5}_1 \| \psi_s \|_3^{6/5}  (\tr \, (\overline{\xi}_s \xi_s) ^2)^{3/5}}{h^{3/5}}
\leq \; C h^{3/5} \end{split} \]
where we used that \[ \begin{split} \tr \, (\overline{\xi}_s \xi_s)^2  & = \int dx_1 dx_2 dx_3 dx_4 \, \overline{\xi}_s (x_1, x_2) \xi_s (x_2 , x_3) \overline{\xi}_s (x_3, x_4) \xi_s (x_4, x_1) \\ & \leq  \int dx_1 dx_2 dx_3 dx_4 \, |\xi_s (x_1, x_2)|^2  |\xi_s (x_3, x_4)|^2  \\ & = \| \xi_s \|_2^4 \leq C h^2 \end{split} \]
from Proposition \ref{prop}. The term $G_{4,s}^{(1)}$ can be controlled using the bound $\tr \, (\gamma_s - \bar{\alpha}_s \alpha_s) \leq C h$ from Proposition \ref{prop}. Since $(\gamma_s - \bar{\alpha}_s \alpha_s) \geq 0$, this implies immediately that $\tr \, (\gamma_s - \bar{\alpha}_s \alpha_s)^2 \leq C h^2$. Therefore  
\[\begin{split} 
\| G_{4,s}^{(1)} \|_{6/5}^{6/5} \leq \; & \frac{1}{h^6} \int dX \Big| \int dr dz \, \alpha_0 (r/h) (V\alpha_0) (z/h) (\gamma_s - \bar{\alpha}_s \alpha_s) (X+r/2, X+z-r/2) \\ 
&\hspace{10cm} \times \psi_s (X+(z-r)/2) \Big|^{6/5} \\ 
\leq \; & \frac{1}{h^6} \left[ \int dr dz \, \alpha_0 (r/h) (V\alpha_0) (z/h) \right]^{1/5}\\ & \hspace{1cm} \times \int dX dr dz \, \alpha_0 (r/h) (V\alpha_0) (z/h) |(\gamma_s - \bar{\alpha}_s \alpha_s) (X+r/2, X+z-r/2)|^{6/5}  \\ 
&\hspace{10cm} \times |\psi_s (X+(z-r)/2)|^{6/5} \\
\leq \; &\frac{\| \alpha_0 \|_1^{1/5} \| V\alpha_0 \|_1^{1/5}}{h^{24/5}} \left[ \int dX dr dz \, \alpha_0 (r/h) (V\alpha_0) (z/h) \,  |\psi_s (X+(z-r)/2)|^{3}  \right]^{2/5} \\ & \hspace{1cm} \times  \left[ \int dX dr dz \, \alpha_0 (r/h) (V\alpha_0) (z/h) \, | (\gamma_s - \bar{\alpha}_s \alpha_s) (X+r/2, X+z-r/2)|^2 \right]^{3/5}   \\ \leq \; &\frac{\| \alpha_0 \|_1^{3/5} \| \alpha_0 \|_\infty^{3/5} \| V\alpha_0 \|_1^{6/5} \| \psi \|_3^{6/5} (\tr \, (\gamma_s - \bar{\alpha}_s \alpha_s)^2)^{3/5}}{h^{3/5}} \leq \; C h^{3/5} \end{split} \] 
Finally, we compute
\[ \begin{split} \| G_{5,s}^{(1)} \|_{6/5}^{6/5} \leq \; & \frac{1}{h^{18/5}} \int dX \left| \int dr dz \, \alpha_0 (r/h) V(z/h) \gamma_s (X+r/2, X+z-r/2) \wt{\xi}_s (X+(z-r)/2, z) \right|^{6/5} \\ 
\leq \; & \frac{1}{h^{18/5}} \left[ \int dr dz \, \alpha_0 (r/h) V(z/h)  \right]^{1/5} \\ &\times  \int dX dr dz \, \alpha_0 (r/h) V(z/h) |\gamma_s  (X+r/2, X+z-r/2)|^{6/5}  |\wt{\xi}_s (X+(z-r)/2, z)|^{6/5} \\
\leq \; & \frac{\| \alpha_0 \|^{1/5}_1 \| V \|^{1/5}_1}{h^{12/5}} \left[ \int dX dr dz \, \alpha_0 (r/h) V(z/h) |\gamma_s  (X+r/2, X+z-r/2)|^2 \right]^{3/5}\\ &\hspace{1cm} \times \left[  \int dX dr dz \, \alpha_0 (r/h) V(z/h)  |\wt{\xi}_s (X+(z-r)/2, z)|^3 \right]^{2/5} \\ 
\leq \; & h^{3/5}  \| \alpha_0 \|^{3/5}_1 \| V \|^{2/5}_\infty \| \alpha_0 \|_\infty^{3/5} \| V \|^{4/5}_1 (\tr \gamma_s^2)^{3/5} \| \wt{\xi}_s \|_3^{6/5} \leq \; C h^{3/5} 
\end{split} \]
using that, by Proposition \ref{prop}, $\tr \, \gamma_s^2 \leq h$ and $\| \wt{\xi}_s \|_3 \leq \| \wt{\xi}_s \|_{H^1} \leq h^{-1/2}$. 
\end{proof}
 
 \appendix
 
\section{Strichartz Estimates}
\label{sec:strich}

Let $n \geq 3$. A pair $(r,q) \in [1,\infty) \times [1,\infty)$ is called admissible if $2 \leq r \leq 2n/(n-2)$ and $(2/q) = n (1/2 - 1/r)$. Let
\[ \phi_f (t) = \int_0^t ds \, U(t-s) f(s) \]
For any two admissible pairs $(r,q)$ and $(\rho,\delta)$ we have
\[ \| \phi_f \|_{L^q ([0,T] , L^r (\bR^n))} \leq C \| f \|_{L^{\delta'} ([0,T], L^{\rho'} (\bR^3))} \]
where $(\delta', \rho')$ are dual indices to $(\delta,\rho)$ (i.e. $1/\delta + 1/ \delta' = 1/\rho + 1/\rho' = 1$). In our analysis we use Strichartz estimates with $(r,q) = (2,\infty)$ and $(\rho,\delta) = (6,2)$. In the latter case, the Strichartz estimate is referred to as the endpoint Strichartz estimate, and was proven in \cite{KT}.

\thebibliography{hhhh}


\bibitem{BCS} Bardeen, J.; Cooper, L.; Schrieffer, J.: Theory of
    Superconductivity. {\it Phys. Rev.} {\bf 108} (1957), 1175--1204.

\bibitem{zwerger} Bloch, I.; Dalibard, J.; Zwerger, W.: Many-body
    physics with ultracold gases. {\it Rev. Mod. Phys.} {\bf 80} (2008), 885--964.

\bibitem{DW} Drechsler, M.; Zwerger, W.: Crossover from
    BCS-superconductivity to Bose-condensation. {\it Ann. Phys.} {\bf 1} (1992),
  15--23.

\bibitem{FHSS}
Frank, R.L.; Hainzl, C.; Seiringer, R.; Solovej, J.P.: Microscopic derivation of Ginzburg-Landau theory. To appear  in {\em J. Amer. Math. Soc.} Preprint arXiv:1102.4001.

\bibitem{FHNS}  Frank, R.L.; Hainzl, C.; Naboko, S.; Seiringer, R.:   
    The critical temperature for the BCS equation at weak coupling.
 {\em  J. Geom. Anal. } {\bf 17} (2007), 559--568.
 
\bibitem{G} 
Goldberg, M.; Strichartz estimates for Schr\"odinger operators with a non-smooth magnetic potential.
{\it Discrete Contin. Dyn. Syst. } {\bf 31} (2011), no. 1, 109-118.

\bibitem{HS} 
Hainzl, C.; Seiringer, R.: Low density limit of {BCS} theory and {B}ose-{E}instein condensation of fermion pairs. To appear in {\em Lett. Math. Phys.} Preprint arXiv:1105.1100.

\bibitem{HHSS}  Hainzl, C.; Hamza, E.; Seiringer, R.; Solovej, J.P.: 
    The BCS functional for general pair interactions. 
 {\em  Commun. Math. Phys.} {\bf 281} (2008), 349--367.

\bibitem{HLLS} Hainzl, C.; Lenzmann, E.; Lewin, M.; Schlein, B.:  On Blowup for time-dependent generalized Hartree-Fock equations. {\em Ann. Henri Poincare}  {\bf 11}  (2010), 1023.


\bibitem{HS2} Hainzl, C.; Seiringer, R.: Critical temperature and
    energy gap in the BCS equation. {\em Phys. Rev. B} {\bf 77} (2008), 184517.

\bibitem{KT}
Keel, M.; Tao, T.: Endpoint {S}trichartz estimates. {\it Amer. J. Math.} {\bf 120} (1998), no. 5, 955--980

\bibitem{Leg} Leggett, A.J.: Diatomic Molecules and Cooper Pairs. {\it Modern trends in the theory of condensed matter.} A. Pekalski, R. Przystawa, eds., Springer (1980).
  
\bibitem{melo} S\'a de Melo, C.A.R. ; Randeria, M.; Engelbrecht, J.R.:
  Crossover from BCS to Bose Superconductivity: Transition
    Temperature and Time-Dependent Ginzburg-Landau Theory.
  {\it Phys. Rev. Lett.} {\bf 71} (1993), 3202--3205.

\bibitem{NRS} Nozi\`eres, P.; Schmitt-Rink, S.: {Bose
    Condensation in an Attractive Fermion Gas: From Weak to Strong
    Coupling Superconductivity}. {\it J. Low Temp. Phys.} {\bf 59} (1985), 195--211.
  
  \bibitem{randeria} Randeria, M.: Crossover from BCS Theory to
    Bose-Einstein Condensation. {\it Bose-Einstein Condensation},
  A. Griffin, D.W. Snoke, S. Stringari, eds., Cambridge (1995).

\end{document}